\newtheorem{theorem}{Theorem}[section]
\newtheorem{lemma}{Lemma}[section]
\newtheorem{cor}{Corollary}[section]
\newtheorem{define}{Definition}[section]
\newcommand{\norm}[1]{\lVert #1 \rVert}
\newcommand{\norml}[1]{\lVert #1 \rVert_{\infty}}
\tikzstyle{obs} = [fill=teal!20, draw=teal!30, line width=1pt, rounded corners, rounded corners=0.5pt]
\tikzstyle{edge} = [fill=gray!40!white, draw=gray!60!black, dotted, line width=1pt, rounded corners=0.5pt]
\tikzstyle{mpath} = [draw=orange!80!black, dashed, line width=0.75pt, rounded corners=0.5pt]
\tikzstyle{robot} = [fill=orange!20, draw=orange!30, line width=1pt, rounded corners, rounded corners=0.5pt]
\tikzstyle{region} = [fill=red!20, draw=red!30]
\title{A Sublinear Algorithm for Path Feasibility Among Rectangular Obstacles}
\author{Alex Fan}
\email{htfan@mit.edu}
\author{Alicia Li}
\email{aliciali@mit.edu}
\author{Arul Kolla}
\address[Alex Fan, Alicia Li, Arul Kolla, Jason Gonzalez]{MIT}
\email{arulk@mit.edu}
\author{Jason Gonzalez}
\email{jgon@mit.edu}
\begin{document}

\begin{abstract}
The problem of finding a path between two points while avoiding obstacles is critical in robotic path planning. We focus on the feasibility problem: determining whether such a path exists. We model the robot as a query-specific rectangular object capable of moving parallel to its sides. The obstacles are axis-aligned, rectangular, and may overlap. Most previous works only consider nondisjoint rectangular objects and point-sized or statically sized robots. Our approach introduces a novel technique leveraging generalized Gabriel graphs and constructs a data structure to facilitate online queries regarding path feasibility with varying robot sizes in sublinear time. To efficiently handle feasibility queries, we propose an online algorithm utilizing sweep line to construct a generalized Gabriel graph under the $L_\infty$ norm, capturing key gap constraints between obstacles. We utilize a persistent disjoint-set union data structure to efficiently determine feasibility queries in $\mathcal{O}(\log n)$ time and $\mathcal{O}(n)$ total space.
\end{abstract}

\maketitle

\section{Introduction}

Finding the shortest path between two points that avoids colliding with a set of obstacles is an essential problem in robotic path planning. The variant we choose to study is the \textit{feasibility problem}: determining if there exists a collision-free path between two points. Specifically, we model the robot as a rectangle, which can move along the two axes. The size of the robot is specified in each query. Thus, our problem is the $L_1$ (rectilinear) path feasibility problem. Moreover, the obstacles are rectangular and axis-aligned with the robot. Also, the obstacles are not necessarily disjoint from each other, as our method can be used in the setting where the obstacles are rectilinear, meaning the objects are composed of all right angles and all their sides are parallel to a grid axes. We provide an online algorithm with a preprocessing step to construct the data structure and a search mechanism to handle queries, as we are tasked with determining path feasibility between two query points along with a query robot size at each time step. Our goal is to construct a data structure to facilitate these queries in sublinear time. 

The $L_1$ shortest collision-free path problem is generally approached from two angles: applying a continuous Dijkstra scheme \cite{maheshwari2018rectilinearshortestpathstransient}, or constructing a visibility graph and then performing search on it \cite{Chen2011}. The first paradigm typically involves propagating a wave front to facilitate the location of intersection points of potential path with obstacles. The second typically involves triangulating the free space, or constructing trees or track graphs. However, most previous works only consider a point robot or a static sized rectangular robot, whereas our method enables queries with different rectangular and circular robot sizes.

Our method first uses a sweep line algorithm to construct a planar graph which captures \textit{relevant} gaps between two obstacles. Intuitively, these gaps can be seen as a constraint on the maximum size of the robot which can pass between the obstacles to enter a different region. This graph is a generalized Gabriel graph under the $L_\infty$ norm. Then, we construct the dual of this graph, so now every vertex is a region bounded by relevant gaps and each edge represents a gap. These edges are weighted by the width of the corresponding gap. We can then use the persistent Disjoint-Set Union data structure. We add edges from the dual planar graph in order from smallest width to largest width. Using the DSU, we can quickly determine the feasibility of a path in $\mathcal{O}(\log n)$, as explained in more detail in Section 4.

\begin{figure}[h!]
    \centering
    \includegraphics[width=0.5\linewidth]{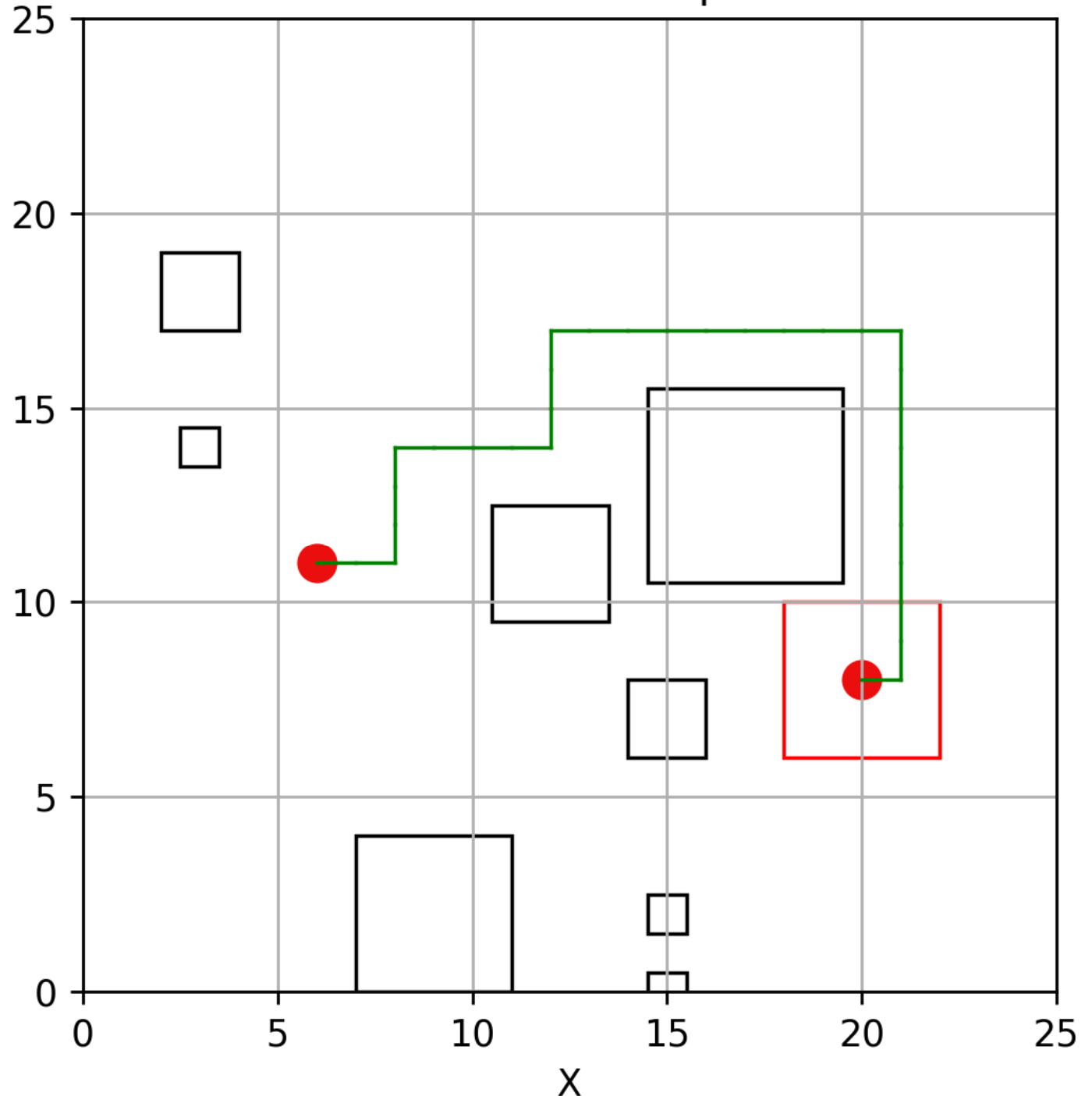}
    \caption{An example where the red dots are the starting and ending positions, and the green line is our trajectory, indicating that there is a feasible path.}
    \label{fig:problem_introduction}
\end{figure}

\section{Related Work}

While we are interested in the feasibility, not the optimality of $L_1$ paths, most previous works studying $L_1$ paths with polygonal obstacles studied shortest paths [\cite{rectilinearworlds},  \cite{Chen2011}, \cite{chen2012computingl1shortestpaths}]. For example, \cite{chen2012computingl1shortestpaths} and \cite{LL} consider the general case of polygonal obstacles. \cite{chen2012computingl1shortestpaths} supports queries of any two points, while \cite{LL} finds the minimum distances of a predetermined set of origin-destination nodes. \cite{chen2012computingl1shortestpaths} computes a shortest path map of size $\mathcal{O}(n)$ by triangulating the free space. They are able to answer queries in $\mathcal{O}(\log n)$ time, similar to our approach. However, their method assumes that the obstacles are disjoint. The disjoint assumption fundamentally changes the problem since most previous works assume that given any two obstacles, even if adjacent, there exists a path of width $0$ that a robot of width $0$ can pass through. This yields a strong guarantee that the optimal path is monotonic along one of the dimensions. Hence, obstacle setups like a spiral maze would not be possible, unlike in our work. Meanwhile, \cite{LL} constructs trees from each origin node where initial edges connect nodes which are directly accessible from each other without interfering objects. Then, they run a variant of Dijkstra's algorithm on these trees. 

Other previous works study shortest $L_1$ paths with rectangular objects. For instance, \cite{rectilinearworlds} considers the general case of rectangular boxes in higher dimensions, achieving a space complexity of $\mathcal{O}(n\log n)$ and a query time complexity of $\mathcal{O}(\log n)$ in two dimensions. This differs from our work because our method supports queries with different robot sizes. Meanwhile, \cite{10.1145/323233.323260} uses a graph sweep algorithm which leverages the fact that optimal paths are monotonic in at least one direction when the rectangular objects are disjoint. Their algorithm takes $\mathcal{O}(n\log n)$ for pre-processing and $\mathcal{O}(\log n)$ for each query. 

Most previous methods assume that the obstacles are disjoint and that the robot is a point. It is not obvious how to extend these works to the more realistic rectangular robots we consider. In particular, because the obstacles are assumed to be disjoint, the technique developed in \cite{10.1145/359156.359164}, in which the obstacle space is taken to be the Minkowski sum of the robot and the obstacles cannot be directly applied. To the extent of our knowledge, our work is the first to provide an efficient algorithm for path feasibility with a rectangular robot and obstacles which are not necessarily disjoint. Many important domains include non-disjoint rectangle obstacles, such as mazes and series of multiple rooms, for which the monotonicity assumption breaks down. Our work addresses this gap.

Another subfield of literature considers a point robot representation and objects that are rectilinear [\cite{rectilinearworlds}, \cite{1676904}, \cite{maheshwari2018rectilinearshortestpathstransient}]. For instance, \cite{1676904} constructs a so-called \textit{track graph} using a procedure similar to Dijkstra's algorithm in order to solve the one-to-all shortest paths problems on $n$ points, $e$ total edges, $k$ intersections of obstacle tracks, and $t$ extreme edges in $\mathcal{O}(n \min {\log n, \log e} +
(e + k) \log t$ time complexity. However, \cite{1676904} assumes the rectilinear obstacles are convex, whereas our method is applicable for any rectilinear obstacles. Meanwhile, \cite{maheshwari2018rectilinearshortestpathstransient} uses a continuous Dijkstra paradigm to find the shortest time path among transient rectilinear obstacles, meaning the obstacles may appear and disappear over time. The authors propagate a wave front, specifically a rhombus, from the source point and then conducts search over the intersections of the wave and obstacles to achieve an $\mathcal{O}(\log n)$ query time. This setting of point robot representations and rectilinear objects is equivalent to the setting of a rectangular robot and rectangular objects, since we can use the method described in \cite{10.1145/359156.359164}. However, many of these algorithms, including \cite{1676904}, only consider fixed target point and fixed robot size queries. Our algorithm supports queries of \textit{\textbf{different}} robot sizes, which cannot be easily done with rectilinear object methods in $\mathcal{O}(\log n)$ time.


\section{Preliminaries}

\subsection{Problem statement}
 We consider the problem of motion planning for a rectangular robot in a two-dimensional Euclidean space $\mathbb{R}^2$ with rectangular obstacles and establish the fundamental definitions and notation used throughout the paper. For brevity, we will analyze one instance of the problem for a fixed query.

Our first observation is that if our robot is rectangular, we can translate it into a square by scaling the $x$-dimension appropriately. Hence we define the problem for square robots below, which can easily be generalized to rectangular robots.

Define a set of $n$ \textit{axis-aligned} rectangular obstacles $\mathcal{O} = \{O_1, \dots, O_n\}$. Each obstacle $O_i$ is characterized by its lower-left corner $(x_1^i, y_1^i) \in \mathbb{R}^2$ and upper-right corner $(x_2^i, y_2^i) \in \mathbb{R}^2$, where $x_1^i < x_2^i$ and $y_1^i < y_2^i$. Formally, obstacle $O_i$ occupies the region $[x_1^i, x_2^i] \times [y_1^i, y_2^i]$. We also define the \textit{width} and \textit{height} of the obstacle in the usual way: $w_i = x_2^i - x_1^i$ and $h_i = y_2^i - y_1^i$.

The robot $R$ is defined as an axis-aligned square with side length $d \in \mathbb{R}^+$.


At any time $t$, the configuration of the robot can be represented by its center point $(x(t), y(t)) \in \mathbb{R}^2$. The robot only moves parallel to the axes. Given a configuration $(x, y)$, the robot occupies the region $(x - \frac{d}{2}, x + \frac{d}{2}) \times (y - \frac{d}{2}, y + \frac{d}{2})$.

A trajectory $\tau : [0, 1] \rightarrow \mathbb{R}^2$ is defined as a continuous function mapping time to robot configurations. For any trajectory to be feasible, it must satisfy the \textit{collision-free constraint}: at no point in time can the robot's occupied region intersect with any obstacle. Formally, for all $t \in [0, 1]$ and for all $O_i \in \mathcal{O}$, we require $R(\tau(t)) \cap O_i = \varnothing$, where $R(\tau(t))$ denotes the region occupied by the robot at configuration $\tau(t)$.

The motion planning problem can now be formally stated. Given a start configuration $s = (s_x, s_y)$ and goal configuration $t = (e_x, e_y)$, we seek to determine whether there exists a feasible trajectory $\tau$ such that $\tau(0) = s$ and $\tau(1) = t$. We assume that both start and goal configurations are collision-free, i.e., $R(s) \cap O_i = \varnothing$ and $R(t) \cap O_i = \varnothing$ for all $O_i \in \mathcal{O}$.

\subsection{Edges}

Given two square obstacles $O_i$ and $O_j$ from the obstacle set $\mathcal{O}$, we define a \textit{thin edge} $E_{i,j}$ between them when they are ``close enough'' to create a meaningful passage. Namely, given obstacles $O_i$ and $O_j$, a thin edge $E_{i,j}$ is defined as the rectangular region given in equation \ref{eqn:thin}.
\begin{equation}
    \label{eqn:thin}
    E_{i,j} := [\max(x_1^i, x_1^j), \min(x_2^i, x_2^j)] \times [\max(y_1^i, y_1^j), \min(y_2^i, y_2^j)].
\end{equation}

Visually, this can be seen as three distinct cases, shown in Figure \ref{fig:edging}.

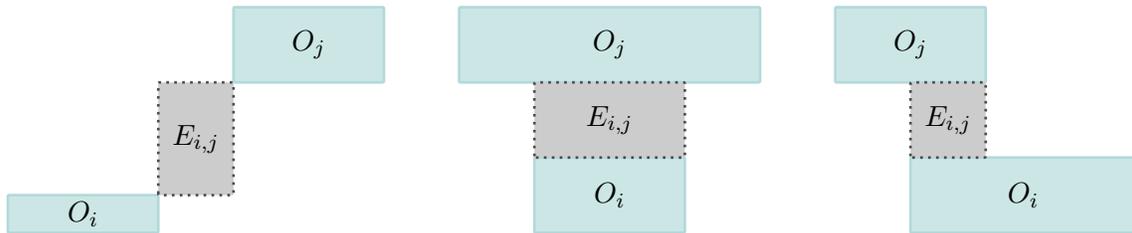
\begin{figure}[h!]
    \centering
    \begin{tikzpicture}[yscale=0.5]
    \draw[obs] (0,1) rectangle (2,2) node[midway] {$O_i$};
    \draw[obs] (3,5) rectangle (5,7) node[midway] {$O_j$};
    \draw[edge] (2,2) rectangle (3,5) node[midway] {$E_{i,j}$};

    \draw[obs] (7,1) rectangle (9,3) node[midway] {$O_i$};
    \draw[obs] (6,5) rectangle (10,7) node[midway] {$O_j$};
    \draw[edge] (7,3) rectangle (9,5) node[midway] {$E_{i,j}$};
    
    \draw[obs] (12,1) rectangle (15,3) node[midway] {$O_i$};
    \draw[obs] (13,5) rectangle (11,7) node[midway] {$O_j$};
    \draw[edge] (12,3) rectangle (13,5) node[midway] {$E_{i,j}$};
    \end{tikzpicture}
    \caption{Three cases for how a thin edge between obstacles is constructed.}
    \label{fig:edging}
\end{figure}

The main idea of the algorithm will be to show that by choosing a subset of edges to include, these edges and the obstacles will bound several regions in the plane where the robot has no obstacles.

\subsection{The Gabriel graph} 
\label{def:gabriel_graph}
The Gabriel graph (visualized in Figure \ref{fig:gabriel} for random points) is a fundamental structure in computational geometry that helps identify meaningful adjacencies between points. Given a set of points $P$ in the plane, two points $p_i, p_j \in P$ are connected by an edge in the Gabriel graph if and only if the circle with diameter $\overline{p_ip_j}$ (i.e., with $p_i$ and $p_j$ as antipodal points) contains no other points from $P$ in its interior.

\begin{figure}[h!]
    \centering
    \includegraphics[scale=0.125]{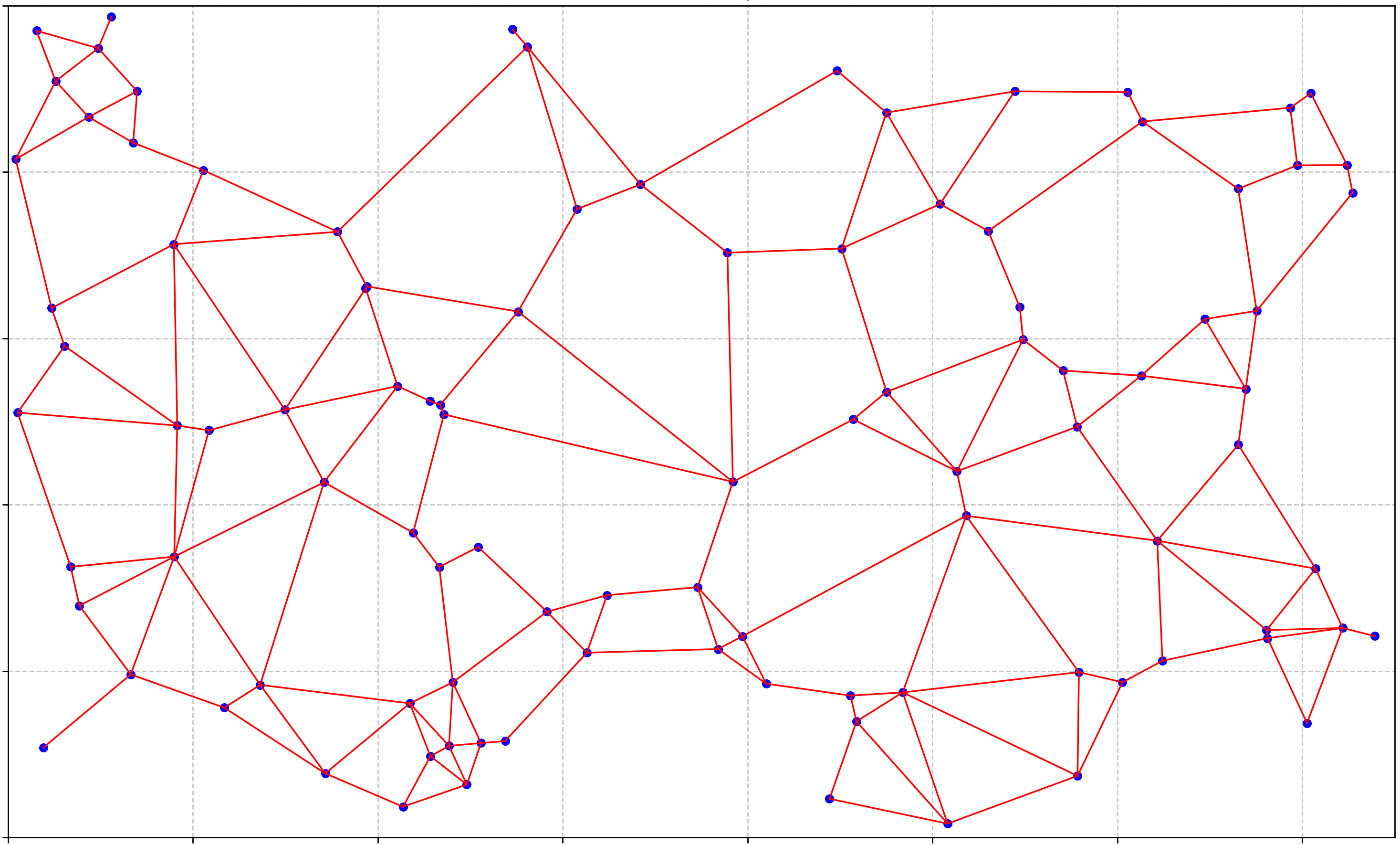}
    \caption{An example of a Gabriel graph on random points.}
    \label{fig:gabriel}
\end{figure}

In our context, a generalized Gabriel graph provides a way to determine which pairs of obstacles should be considered ``adjacent'' for the purposes of region partitioning. Specifically, we use the centers of obstacles as the point set for constructing our generalized Gabriel graph. The resulting edges identify obstacles that form relevant constraints for robot motion, as formalized below.

\begin{figure}[h!]
    \centering
    \begin{tikzpicture}[yscale=0.5]
    \draw[obs] (0,1) rectangle (2,2) node[midway] {$O_i$};
    \draw[obs] (3,5) rectangle (5,7) node[midway] {$O_j$};
    \draw[edge] (2,2) rectangle (3,5) node[midway] {$E_{i,j}$};
    \draw[mpath] (2,0) rectangle (3,7);

    \draw[obs] (7,1) rectangle (9,3) node[midway] {$O_i$};
    \draw[obs] (6,5) rectangle (9.5,7) node[midway] {$O_j$};
    \draw[edge] (7,3) rectangle (9,5) node[midway] {$E_{i,j}$};
    \draw[mpath] (6,3) rectangle (10,5);
    
    \draw[obs] (12,1) rectangle (15,3) node[midway] {$O_i$};
    \draw[obs] (13,5) rectangle (11.4,7) node[midway] {$O_j$};
    \draw[edge] (12,3) rectangle (13,5) node[midway] {$E_{i,j}$};
    \draw[mpath] (11,3) rectangle (14,5);
    \end{tikzpicture}
    \caption{Three cases of minimum pathways for obstacles, shown by the dashed lines.}

    \label{fig:min_pathway}
\end{figure}
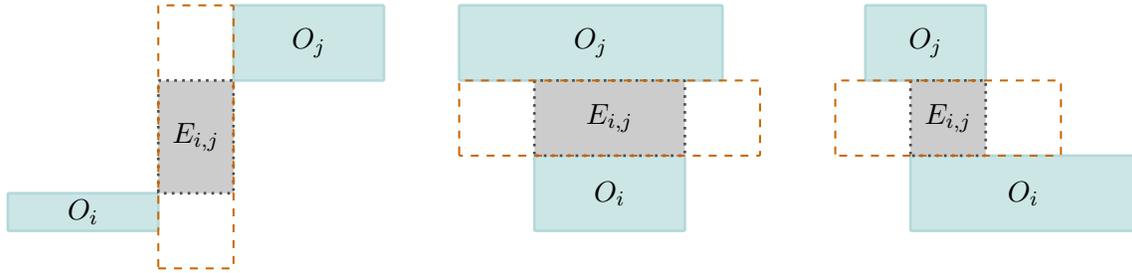

\begin{define}
    \label{ref_gabriel_graph}
    For each pair of obstacles $O_i$ and $O_j$, we define their \textbf{minimum pathway} as the smallest area that needs to be clear for a robot of maximum size to fully utilize the constraint. If there are no other obstacles in the minimum pathway, then the constraint is said to be \textbf{relevant}. Formally, there are three types of pathways depending on whether the two obstacles are completely overlapping, partially overlapping, or diagonal, which are illustrated in Figure \ref{fig:min_pathway}.
\end{define}

An intuitive definition of the minimum pathways is to imagine a robot with the maximum allowed width to squeeze in from either end, pass through the pathway, and exit from the other end.

Formally, for each pair of obstacles that form a relevant constraint, we will construct a thin edge between them. This will divide the plane into several \textit{regions}, a notion that is formalized below.

\subsection{Region partitioning} We formalize the intuition mentioned above.
\begin{define}
Given a set of obstacles $\mathcal{O}$ and a collection of thin edges $\mathcal{E}$, we partition the plane into distinct \textbf{regions}, which are the empty parts of the plane bounded by obstacles or thin edges. A visualization of regions is shown in Figure \ref{fig:region}.
\end{define}

\begin{figure}[h!]
    \centering
    \begin{tikzpicture}[xscale=1.5]
    \draw[region] (-1,0.5) rectangle (0,1.75) node[midway] {$R_1$};
    \draw[region] (1,0.5) rectangle (2,2) node[midway] {$R_2$};
    
    \draw[obs] (0,1) rectangle (1,2.5) node[midway] {$O_1$};
    \draw[obs] (2,3) rectangle (4,2) node[midway] {$O_2$};
    \draw[obs] (-1,0.5) rectangle (2.5,-0.5) node[midway] {$O_3$};
    \draw[obs] (-2,1.75) rectangle (-1.5,2.75) node[midway] {$O_4$};

    \draw[edge] (1,2) rectangle (2,2.5) node[midway] {$E_{1,2}$};
    \draw[edge] (2,0.5) rectangle (2.5,2) node[midway] {$E_{2,3}$};
    \draw[edge] (0,0.5) rectangle (1,1) node[midway] {$E_{1,3}$};
    \draw[edge] (-1.5,1.75) rectangle (0,2.5) node[midway] {$E_{1,4}$};
    \draw[edge] (-1.5,0.5) rectangle (-1,1.75) node[midway] {$E_{3,4}$};
    \end{tikzpicture}
    \caption{An example of several obstacles, thin edges, and two of the regions they form.}
    \label{fig:region}  
\end{figure}
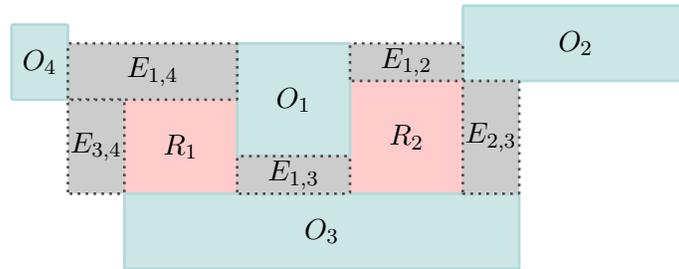

We can represent this partition as a planar graph $G = (V, E)$, where each vertex $v \in V$ corresponds to a region of the partition, and each thin edge $e \in E$ represents a shared boundary between adjacent regions. Namely, each edge $e \in E$ is weighted by the ``bottleneck width'' of the corresponding boundary~--- that is, the minimum distance between obstacles along that boundary. 


\subsection{Constraint equivalency}
\label{subsection_equivalence}

The main idea in our approach is that the continuous collision-free constraint for all points along a trajectory can be decomposed into two simpler types of constraints: pairwise passability constraints and region constraints. The pairwise passability constraints ensure that the robot can move between adjacent regions through their shared boundary, while the region constraints ensure that the robot can freely move within each region. Using it, we can reduce the infinite-dimensional problem of checking all possible robot positions to a finite set of constraints that can be verified efficiently.

\begin{theorem}[Constraint Equivalence]
\label{thm:constraint}
A feasible trajectory exists between two configurations $s$ and $t$ if and only if there exists a sequence of regions $R_1, R_2, ..., R_k$ from our region partitioning such that
\begin{enumerate}
    \item $s \in R_1$ and $g \in R_k$, and
    \item for each consecutive pair of regions $(R_i, R_{i+1})$, there exists a collision-free path between any point in $R_i$ to any point in $R_{i+1}$ (the so-called pairwise passability).
\end{enumerate}
\end{theorem}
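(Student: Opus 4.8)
The plan is to prove the two implications separately, with the reverse direction being the routine one. Given regions $R_1,\dots,R_k$ satisfying (1) and (2), I would construct a feasible trajectory by concatenation: set $p_1=s$, and for $1\le i<k$ use pairwise passability of $(R_i,R_{i+1})$ to obtain a collision-free path from $p_i\in R_i$ to a chosen point $p_{i+1}\in R_{i+1}$, taking $p_k=t$ on the final step (valid since $t\in R_k$). Chaining these finitely many continuous, collision-free paths and rescaling the parameter to $[0,1]$ produces the desired $\tau$. The only wrinkle is the degenerate case $k=1$, where (2) is vacuous; there one invokes the ``region'' half of the decomposition of Section~\ref{subsection_equivalence} --- an obstacle-free region admits a collision-free path between any two configurations it contains --- though in fact this is already subsumed when $k\ge 2$, since pairwise passability of $(R_i,R_{i+1})$ implies any two points of $R_i$ are joined by a collision-free path (route through a point of $R_{i+1}$ and back).

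For the forward direction, suppose $\tau$ is feasible for the square robot of side $d$. Working in configuration space, $\tau$ is a continuous curve in the complement of the inflated obstacles $O_i^{+}=[x_1^i-\tfrac d2,\,x_2^i+\tfrac d2]\times[y_1^i-\tfrac d2,\,y_2^i+\tfrac d2]$. The first step is to establish that the obstacles together with the chosen thin edges cut the obstacle-free part of the plane into exactly the regions of the partition, and that this subdivision is \emph{separating}: a continuous curve moves from one region to another only by passing through the interior of some thin edge. Granting this, I would take $R_1,\dots,R_k$ to be the regions visited by $\tau$ listed in order of first appearance, with repeats deleted so the list is finite and simple; (1) then holds because $s$ and $t$, being collision-free, lie in the obstacle-free part and hence in some region. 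Whenever $\tau$ passes from $R_i$ to $R_{i+1}$ it crosses some thin edge $E_{p,q}$, and at that instant the $d\times d$ robot sits inside the gap between $O_p$ and $O_q$, forcing $d$ to be at most the bottleneck width of $E_{p,q}$. Because that gap then fits the robot and $R_i,R_{i+1}$ have no obstacles in their interiors, one can translate the robot freely within $R_i$, slide it through the gap, and translate freely within $R_{i+1}$; this gives a collision-free path between any point of $R_i$ and any point of $R_{i+1}$, which is exactly pairwise passability.

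The main obstacle is the separating property, and this is precisely where the generalized Gabriel graph earns its keep: one must show that keeping only the thin edges of \emph{relevant} constraints --- rather than one edge per pair of obstacles --- still leaves no bottleneck unguarded, so that no feasible trajectory can slip between two regions through an un-modeled gap. The key observation is that an irrelevant constraint between $O_i$ and $O_j$ has a third obstacle $O_\ell$ inside its minimum pathway, and $O_\ell$ induces constraints at least as restrictive that the trajectory must obey instead; iterating this replacement, every gap the robot actually squeezes through is witnessed by a relevant thin edge. Making this rigorous requires pinning down the minimum pathway in each of the three overlap configurations of Figure~\ref{fig:min_pathway}, proving the domination/replacement step, and verifying that the relevant edges assemble into a plane subdivision whose faces are the regions --- this is the technical heart of the proof. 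A secondary and more routine difficulty is the topological bookkeeping in the forward direction: one should first perturb $\tau$ (or argue by compactness and transversality) so that it meets the finitely many thin-edge segments in finitely many points and crosses each transversally, so that ``the sequence of regions visited by $\tau$'' is well defined.
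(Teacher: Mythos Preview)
Your plan matches the paper's proof: the backward direction by concatenating paths across the chain of regions (the paper packages the within-region movement and the existence of a safe boundary crossing as two short lemmas, but the substance is the same), and the forward direction by listing the regions visited by $\tau$ and arguing that a crossing from $R_i$ to $R_{i+1}$ witnesses pairwise passability. You are, if anything, more careful than the paper on two technicalities it glosses over: the finiteness and well-definedness of the visited-region sequence (your transversality/perturbation remark) and the deformation from ``one collision-free path between these regions'' to ``a path between \emph{any} two points.'' One correction of scope, though: the extended discussion of why only \emph{relevant} thin edges suffice --- the replacement argument through an $O_\ell$ inside the minimum pathway --- is not part of this theorem's proof in the paper. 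Theorem~\ref{thm:constraint} takes the region partition as a given datum; the claim that discarding irrelevant constraints still yields a valid partition is the content of Lemma~\ref{lem:rect_irrelevant}, and planarity is Lemma~\ref{lem:planar_constraints}, both proved separately. So that paragraph can be dropped here and the proof kept self-contained once the partition is handed to you.
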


The formal proof of this theorem is a little involved, so we relegate it to Appendix \ref{app:constr}. However, the rough idea is relatively straightforward; we can move ``freely'' in a region by continuity (a region is bounded by obstacles and edges), and by pairwise passability we can move between regions.

This decomposition principle is particularly powerful because it allows us to transform the continuous motion planning problem into a discrete graph search problem on the dual graph $G^*$, where edge weights represent bottleneck widths and the existence of a path in the graph corresponds directly to the feasibility of the trajectory in continuous space. Thus, the correctness of our approach is guaranteed by the fact that any feasible trajectory must respect these regional and passability constraints, and, conversely, any solution satisfying these constraints can be used to construct a feasible trajectory.


\section{Algorithms}
We now introduce our algorithm in the following fashion:
\begin{enumerate}
    \item We first explore a simpler variant of our problem to introduce relevant ideas and some intuitions.
    \item We prove that, by selecting certain edges, we can obtain a planar graph that can serve as a valid region partitioning scheme.
    \item We provide an algorithm that constructs a valid partition in $\mathcal{O}(n \log n)$ time, where $n$ is the number of points that define our obstacles.
    \item Given a valid partition and its dual planar graph, we provide an algorithm that can answer online queries in $\mathcal{O}(\log n)$ time.
\end{enumerate}

\subsection{Circular obstacles and robots}
Before attacking the main problem, it is helpful to outline an argument for the alternate case where the robot is a circle and all obstacles are congruent circles. 

The key insight is that each pair of obstacles induces a constraint on the maximum radius of robots that can pass between them. However, many of these constraints are redundant~--- specifically, if a third obstacle lies within the circle whose diameter is the line segment between two obstacles, then the constraint between those two obstacles is superseded by the constraints involving the third. This is analogous to our idea of minimum pathways in the rectilinear case.

Thus, we perform the same reduction: the relevant constraints form precisely a Gabriel graph of the obstacle centers, which is a subgraph of the Delaunay triangulation. Since the Delaunay triangulation can be computed in $\mathcal{O}(n \log n)$ time and provides a valid planar partitioning of the space, we obtain an efficient algorithm for preprocessing the obstacles. The resulting partition has the property that any path between two points is feasible for a robot if and only if its radius is smaller than the minimum constraint along that path.

Thus, we give an analogous result to our main algorithm; the formal proofs of correctness and full details are given in Appendix \ref{app:circ}. This is essentially the intuition we will use for the full problem, which we will now discuss.

\subsection{Constraints induced by rectilinear obstacles of varying shapes}
We now return to the original problem, which concerns rectilinear obstacles and a rectangular robot. We first note the following lemma, which allows us to transform our set of rectilinear obstacles into a set of disjoint rectangular obstacles.

\begin{lemma}
    A rectilinear object consisting of $n$ points can be decomposed into at most $\mathcal{O}(n)$ disjoint rectangles in $\mathcal{O}(n \log n)$ time.
\end{lemma}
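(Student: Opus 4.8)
The plan is to use a vertical sweep line (or, equivalently, a trapezoidal/slab decomposition) to cut the rectilinear object into rectangles, and then bound the number of pieces by the number of vertices. First I would observe that a rectilinear polygon on $n$ vertices (possibly with holes, since the object may be any rectilinear region) has all edges axis-parallel, so its vertices have only $\mathcal{O}(n)$ distinct $x$-coordinates. Sweeping a vertical line from left to right, the combinatorial structure of the cross-section of the object changes only at these $x$-coordinates; between two consecutive such coordinates the object restricted to that slab is a disjoint union of axis-aligned rectangles. Concatenating slabs and merging vertically-adjacent congruent strips where no vertex forces a split, I would argue that the total number of maximal rectangles produced is $\mathcal{O}(n)$: each rectangle can be charged to a vertex of the polygon that lies on its boundary (for instance its top-left corner is either a reflex/convex vertex of the original polygon or is created by exactly one sweep event, and each event coordinate hosts only $\mathcal{O}(1)$ new rectangle starts per edge incident to it, of which there are $\mathcal{O}(n)$ in total). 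This gives the $\mathcal{O}(n)$ piece count.

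Next I would handle the running time. Sorting the $\mathcal{O}(n)$ vertices by $x$-coordinate costs $\mathcal{O}(n\log n)$. Maintaining the sweep-line status — the set of $y$-intervals currently covered by the object — in a balanced binary search tree supports each insertion, deletion, and split/merge in $\mathcal{O}(\log n)$, and there are $\mathcal{O}(n)$ events, so the sweep itself is $\mathcal{O}(n\log n)$. Emitting a rectangle whenever an interval in the status structure is closed off (either because an edge ends it or because an event subdivides it) produces the decomposition within the same bound. I would note that disjointness is immediate from the construction: two emitted rectangles either lie in different slabs (disjoint $x$-ranges, except possibly sharing a boundary line, which we may break consistently) or in the same slab at different $y$-intervals.

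The main obstacle I expect is the bookkeeping in the charging argument that pins the count at $\mathcal{O}(n)$ rather than a weaker $\mathcal{O}(n^2)$ — one must be careful that a single reflex vertex does not spawn arbitrarily many rectangle fragments as the sweep proceeds, which is exactly why the merging step (coalescing strips across slabs whose boundaries are not forced by any vertex) is needed. A clean way around this is to appeal to the well-known fact that a vertical decomposition of a rectilinear polygon with $h$ holes and $n$ vertices has $\mathcal{O}(n)$ trapezoids, all of which are rectangles here; then the lemma follows essentially by citation, and the only real work is confirming the $\mathcal{O}(n\log n)$ time for the sweep, which is standard. I would present the self-contained sweep argument but remark that the bound also follows from classical results on trapezoidal maps.
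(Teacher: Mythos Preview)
Your approach is correct and essentially the same as the paper's: both sweep the object, slice it at the coordinates of its vertices, and then charge the resulting rectangles to those vertices to obtain the $\mathcal{O}(n)$ count, with sorting supplying the $\mathcal{O}(n\log n)$ runtime. The paper sweeps top to bottom rather than left to right and phrases the charging as ``each vertex lies at the corner of at most four discontinuities,'' without an explicit merge step; your extra care about coalescing slabs (equivalently, emitting a rectangle only when the sweep status actually changes, i.e.\ producing the trapezoidal map rather than the raw slab decomposition) is precisely what makes the $\mathcal{O}(n)$ bound airtight, so that instinct is well placed.
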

\begin{proof}
    For a rectilinear object, we can perform a sweep line from top to bottom, dividing the object into strips of rows first. We would only cut along $y = c$ for some constant $c$ if there exists a point with $y$-coordinate $c$. Since there are at most $n$ points, we make at most $n$ cuts. 
    
    Likewise, within each strip of row, by construction, they should form a set of rectangles with the same height as the row, although they might be discontinuous. We append each of them into our set of decomposed rectangles. We note that at each discontinuity, there must be at least one point at the corner. 
    
    In addition, each point of the rectilinear object is part of at most $4$ discontinuities. Hence, the number of total discontinuities is bounded by $\mathcal{O}(n)$. Since each row and discontinuity generates an extra rectangle, our set consists of at most $\mathcal{O}(n)$ rectangles. 
    
    Finally, we note that the only algorithmic complexity comes from sorting the coordinates and bucketing the points into their respective rows. All of these operations can be done in $\mathcal{O}(n \log n)$ time.
\end{proof}

From here on, we assume we have finished pre-processing all the obstacles and that all of our obstacles are rectangular. We now dive into the main theorem of our paper.

\begin{theorem}
    Given a set of $n$ obstacles $\mathcal{O} = \{O_1, ..., O_n\}$ in the two-dimensional Euclidean space $\mathbb{R}^2$, where each obstacle $O_i$ has center $(x_i, y_i) \in \mathbb{R}^2$ and width $w_i$ and height $h_i$, we can construct a valid partitioning and its planar graph dual.
\end{theorem}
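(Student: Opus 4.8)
The plan is to construct the partition explicitly by determining which pairs of obstacles contribute a thin edge, and then to verify that the resulting collection of obstacles and edges yields a well-defined planar graph whose faces are exactly the regions of Definition 2 — so that by Theorem 2.1 (Constraint Equivalence) it is a \emph{valid} partitioning. First I would make precise the notion of a \textbf{relevant constraint} from Definition 2.3: for each ordered pair $O_i, O_j$, form the thin edge $E_{i,j}$ of \eqref{eqn:thin} and its associated minimum pathway (the three cases — fully overlapping, partially overlapping, diagonal — of Figure \ref{fig:min_pathway}), and declare the pair relevant exactly when no third obstacle $O_k$ intersects the interior of that minimum pathway. The edge set $\mathcal{E}$ of the partition is then the set of $E_{i,j}$ over all relevant pairs, and I would observe this is a \emph{generalized Gabriel graph} on the obstacle centers under the $L_\infty$ norm, since the minimum pathway plays the role that the diametral disk plays in the ordinary Gabriel graph — in particular this bounds $|\mathcal{E}| = \mathcal{O}(n)$ by the same planarity/sparsity argument that bounds the classical Gabriel graph (it is a subgraph of the $L_\infty$ Delaunay triangulation).

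Next I would show that the union of obstacle boundaries and thin edges, viewed as a plane subdivision, has its complement decomposing into connected open regions, and that adjacent regions share exactly a thin edge — this is what lets us pass to the planar dual $G^*$ with one vertex per region and one (weighted) edge per shared thin-edge boundary. The weight on each dual edge is the bottleneck width of the corresponding $E_{i,j}$, namely $\min(x_2^i,x_2^j)-\max(x_1^i,x_1^j)$ or the analogous expression in the $y$-direction depending on the case. The crux of validity is the two conditions behind Theorem 2.1: (i) within a region the robot moves freely — this follows because a region is bounded only by obstacles and thin edges and is therefore obstacle-free in its interior, so continuity gives a path between any two interior configurations; and (ii) pairwise passability across a shared boundary holds iff the robot's side length $d$ does not exceed that boundary's bottleneck width. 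I would then argue the key completeness claim: \emph{no relevant constraint is lost} — if the robot can squeeze from one region to an adjacent one, the narrowest place it must pass is captured by some thin edge we retained, because any thin edge we \emph{discarded} was superseded (its minimum pathway contained another obstacle $O_k$, which forces the effective passage to route through the tighter constraints involving $O_k$, exactly the minimum-pathway reasoning already used in the circular case in Section 4.1).

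Concretely the steps, in order, are: (1) fix the thin-edge and minimum-pathway definitions and enumerate the three geometric cases; (2) define relevance and hence $\mathcal{E}$, and identify $(\,\{\text{obstacle centers}\},\mathcal{E})$ as the $L_\infty$ generalized Gabriel graph, deriving $|\mathcal{E}|=\mathcal{O}(n)$; (3) show the plane minus (obstacles $\cup\ \mathcal{E}$) is a disjoint union of regions with adjacent regions meeting in a single thin edge, giving a planar graph $G$ on regions and its dual $G^*$ with bottleneck-width weights; (4) invoke Theorem 2.1, checking its hypotheses (i) free motion inside a region and (ii) passability $=$ width test across a boundary; (5) prove the completeness direction — a discarded thin edge is never the true bottleneck, since a third obstacle in its minimum pathway reroutes the passage through retained, tighter edges. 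The main obstacle I anticipate is step (5) together with the geometric bookkeeping in step (3): ensuring that after deleting superseded edges the regions remain well-defined (no spurious merging that would let a fat robot "teleport" between what should be separated regions) and that the minimum-pathway clearance condition exactly characterizes which edges are safe to drop. This is where the three-case analysis of Figure \ref{fig:min_pathway} has to be carried out carefully; the rest is routine sweep-line and planarity accounting, with the $\mathcal{O}(n\log n)$ construction time deferred to the algorithm of the next subsection.
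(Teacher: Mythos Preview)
Your outline captures the right two ingredients --- (a) dropping irrelevant edges preserves validity, and (b) the retained edges give a planar subdivision --- and your step (5) is essentially the paper's Lemma~\ref{lem:rect_irrelevant}. But your argument for (b) has a real gap. You write that the edge set is ``a subgraph of the $L_\infty$ Delaunay triangulation'' and use that to get planarity and the $\mathcal{O}(n)$ bound. This does not go through: the obstacles have \emph{different} widths $w_i$ and heights $h_i$, so the quantity governing passability,
\[
\norml{O_i-O_j}=\max\bigl(|x_i-x_j|-(w_i+w_j)/2,\ |y_i-y_j|-(h_i+h_j)/2\bigr),
\]
is not a metric on the centers, and the minimum pathway is not the $L_\infty$ diametral box of the two centers. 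There is no off-the-shelf $L_\infty$ Delaunay graph on heterogeneous rectangles that your relevant-edge set embeds into; the paper says exactly this (``since we modified the definitions, the resulting graph can no longer be obtained through standard constructions like Gabriel graphs'').

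What the paper does instead is prove planarity directly: Lemma~\ref{lem:planar_constraints} shows by a case analysis (both edges axis-aligned; both diagonal; one of each) that two \emph{relevant} thin edges can never cross, because a crossing would force one endpoint of one edge into the other edge's minimum pathway, contradicting relevance. That non-crossing lemma is the load-bearing step you are missing, and it is also what gives $|\mathcal{E}|=\mathcal{O}(n)$ (planar graph on $n$ vertices). So your step (3) as written --- ``routine sweep-line and planarity accounting'' --- is where the actual content lives; you should replace the Delaunay appeal with a direct proof that relevant edges are pairwise non-crossing, and flag \emph{that} as the main obstacle rather than step (5).
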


First we prove a useful lemma.

\begin{lemma}
    \label{lem:rect_irrelevant}
    Irrelevant constraints never affect the feasibility of our queries.
\end{lemma}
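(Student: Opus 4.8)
The plan is to argue that adding a thin edge corresponding to an irrelevant constraint never changes which regions are connected (via pairwise passability) for any robot size $d$, so by Theorem~\ref{thm:constraint} it cannot change feasibility of any query. Recall that a constraint between $O_i$ and $O_j$ is irrelevant precisely when some third obstacle $O_k$ lies inside the minimum pathway of the pair $(i,j)$. The key observation is that the presence of $O_k$ inside the minimum pathway means that \emph{any robot that could physically fit through the gap $E_{i,j}$ is already blocked} — or, more precisely, that the bottleneck width of the gap between $O_i$ and $O_j$ along that corridor is at most the bottleneck width of one of the gaps $E_{i,k}$ or $E_{j,k}$ (and in the diagonal case, the relevant pathway is entirely subsumed by the pathways through $O_k$). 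So I would first show, case by case on the three pathway types in Figure~\ref{fig:min_pathway}, that if $O_k$ intersects the minimum pathway of $(i,j)$, then $\min(w(E_{i,k}), w(E_{j,k})) \le w(E_{i,j})$, where $w(\cdot)$ denotes the bottleneck width used to weight edges.

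Next I would formalize what ``does not affect feasibility'' means in terms of the dual graph. Fix a robot of side length $d$. A query is feasible iff $s$ and $t$ lie in the same connected component of the subgraph $G^*_d$ of the dual graph obtained by keeping only those edges whose bottleneck width is $> d$ (so the robot can pass). The claim reduces to: the edge $e_{i,j}$ dual to an irrelevant thin edge is, in every such subgraph $G^*_d$ that contains it, redundant for connectivity — i.e., its two endpoints are already connected via a path through $e_{i,k}$ and $e_{j,k}$. This follows from the width inequality above: whenever $w(E_{i,j}) > d$, we also have $w(E_{i,k}) > d$ and $w(E_{j,k}) > d$, so those two edges survive in $G^*_d$ too, and they route around $e_{i,j}$ through the region(s) adjacent to $O_k$. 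Hence deleting all irrelevant thin edges leaves every $G^*_d$ with the same connected components, so every query answer is unchanged.

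The step I expect to be the main obstacle is making the geometric claim about the dual graph precise: I need to verify that when $O_k$ sits inside the minimum pathway of $(i,j)$, the region that the edge $e_{i,j}$ separates is genuinely the \emph{same} region (or the two endpoint regions are genuinely joined by the detour around $O_k$), rather than $O_k$ having carved the area into several pieces in a way that the edges $E_{i,k}, E_{j,k}$ don't reconnect. This requires a careful look at how $O_k$, together with the thin edges incident to it, subdivides the minimum pathway, and arguing that the two sides of $E_{i,j}$ remain linked by a chain of regions bounded by $O_i, O_k, O_j$ and their mutual thin edges. I would handle the three cases of Definition~\ref{ref_gabriel_graph} separately, since the overlapping cases are essentially one-dimensional (the pathway is a vertical strip and $O_k$ blocking it gives the width inequality almost immediately), while the diagonal case needs the observation that the $L_\infty$ ``diametral'' box around the centers of $O_i,O_j$ containing the center of $O_k$ forces $O_k$ to be close to both, again yielding the width bound. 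Finally I would note the converse direction is trivial: removing edges can only decrease connectivity, and we have shown it does not, so feasibility is preserved in both directions.
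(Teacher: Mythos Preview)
Your detour argument has the inequality pointing the wrong way for the conclusion you draw from it. You assert $\min(w(E_{i,k}),w(E_{j,k}))\le w(E_{i,j})$ --- which is true, and in fact the paper proves the stronger $\max(w(E_{i,k}),w(E_{j,k}))\le w(E_{i,j})$ --- but then claim that ``whenever $w(E_{i,j})>d$ we also have $w(E_{i,k})>d$ and $w(E_{j,k})>d$.'' That implication would require $w(E_{i,k}),w(E_{j,k})\ge w(E_{i,j})$, the opposite of what holds geometrically. Since $O_k$ sits \emph{inside} the pathway, the gaps through $O_k$ are \emph{narrower}, not wider, so $e_{i,k}$ and $e_{j,k}$ drop out of $G^*_d$ \emph{before} $e_{i,j}$ does. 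Your detour is therefore unavailable exactly when you need it.

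More seriously, you are treating the wrong range of $d$. Removing the thin edge $E_{i,j}$ from the primal is a \emph{contraction} of $e_{i,j}$ in the dual, not an edge deletion. For $d<w(E_{i,j})$ the two endpoints of $e_{i,j}$ are already connected by $e_{i,j}$ itself, so contraction is harmless and no detour argument is needed. The case that requires work is $d\ge w(E_{i,j})$: here $e_{i,j}\notin G^*_d$, its two endpoint regions may lie in different components, and contracting them could a priori create new feasibility. Your proposal explicitly restricts attention to ``every $G^*_d$ that contains'' $e_{i,j}$, so this case is never touched. The paper's proof handles it by observing that the small region $R$ trapped between $O_i,O_j,O_k$ is bounded only by edges of width at most $s_{i,j}$; hence for $d>s_{i,j}$ a robot can neither enter $R$ nor be legally placed there, so merging $R$ with its neighbour across $E_{i,j}$ cannot change the answer to any query. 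That ``$R$ is a dead zone for large robots'' step is the missing idea in your outline.
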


\begin{proof}
    We first note that a robot's ability to pass through between two rectangles can be interpreted as a generalized $L_{\infty}$ norm since a gap's constraint is the maximum distance along either dimension. A robot of size $s$ can pass through obstacles $O_i$ and $O_j$ if $s \leq \norml{O_i - O_j}$, where we slightly readjust the definition of $\norml{O_i - O_j}$ to be
    \[\norml{O_i - O_j} := \max(|x_i - x_j| - (w_i + w_j)/2, |y_i - y_j| - (h_i + h_j)/2).\]
    
    Next, we show that if an obstacle $O_k$ exists within the minimum pathway of $O_i$ and $O_j$, then we can safely ignore that irrelevant constraint. We first assume that the edges formed by $O_i$ and $O_k$ and $O_j$ and $O_k$ exist. Then, the edge formed by $O_i$ and $O_j$ would not affect our query if and only if the two other constraints are more restrictive than this constraint and if the region formed by $O_i$, $O_j$, and $O_k$ can be safely ignored in our sequence of regions (either because it's always feasible or never feasible).

    We observe that $O_k$ is always closer to both $O_i$ and $O_j$ than they are to each other. This is because if the constraint is of width $s$, then the minimum pathways must span at most $s$ in the other dimension by construction. More specifically, if the thin edge has width $s$ and some height $h \geq 0$, then $O_k$ is at most $s - h$ away from the other two squares in the other direction. Let $s_{i,j}$ denote the generalized $L_{\infty}$ norm of $\norml{O_i - O_j}$; then, we have that $s_{i,j} \geq s_{i,k},s_{j,k}$. This means that in terms of bounding the robot below the $O_i \rightarrow O_j \rightarrow O_k$ boundary, the $s_{i,j}$ constraint would never matter. If $s_{i,j}$ becomes bounding, then $s_{i,k}$ and $s_{j,k}$ would have already been effective.

    This leaves us with the region formed by $O_i, O_j, O_k$. We show that this region is never relevant.
    
    Let the partition region defined by these points be $R$, then for any given queries, there are two cases:
    \begin{itemize}
        \item The starting or ending point falls within $R$: Since we assume that it's always feasible to place the robot at the starting and ending point, the largest the robot can be is a square of side length $s_{i, j}$. Otherwise, the robot would violate the $s_{i,j}$ constraint. However, in this case, the $s_{i,j}$ would never have been bounding since the robot can always pass through this edge.
        \item The path simply passes by region $R$: We only care about the $s_{i, j}$ constraint if the robot is larger than $s_{i, j}$. As we have shown earlier,  $s_{i,j} \geq s_{i,k}, s_{j,k}$. This implies that if the robot is larger than $s_{i, j}$, then it can never enter the region to begin with, and this would be covered by the other two $s_{i, k}$ and $s_{j,k}$ checks. Otherwise, the robot would have to just enter the region through $S_{i, j}$ and leave through $S_{i, j}$ again, which is never optimal. Hence, we can safely ignore the new region created by the irrelevant $s_{i,j}$ edge.
    \end{itemize}

    The two cases show that we never need to treat $R$ as a separate region. Hence, we can safely delete the edge between $O_i$ and $O_j$ and still preserve a valid partitioning.

\end{proof}

However, since we modified the definitions, the resulting graph can no longer be obtained through standard constructions like Gabriel graphs. To show that the resulting graph with these edges is still planar, we claim that two edges can never cross each other.

\begin{lemma}
    \label{lem:planar_constraints}
    For obstacles $O_1, O_2, O_3, O_4$ such that $O_1$ and $O_2$ have a relevant constraint edge and $O_3$ and $O_4$ also have a relevant edge, the Euclidean edge connecting $O_1$ and $O_2$ can never cross $O_3$ and $O_4$.
\end{lemma}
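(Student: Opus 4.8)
The plan is to mimic the classical proof that the Gabriel graph is planar, but with the disk of diameter $\overline{c_ic_j}$ replaced by the minimum pathway of $O_i$ and $O_j$ (Definition~\ref{ref_gabriel_graph}), exploiting that our obstacles are disjoint and each contains its own center. Suppose, for contradiction, that $\overline{c_1c_2}$ and $\overline{c_3c_4}$ meet at a point $p$ interior to both segments. Then the four centers are in convex position and the two segments are the diagonals of the quadrilateral, so (after possibly swapping $c_3\leftrightarrow c_4$) the cyclic order of the hull is $c_1,c_3,c_2,c_4$; in particular each diagonal separates the other two centers. Writing $s_{i,j}=\norml{O_i-O_j}$ for the $L_\infty$ gap width from Lemma~\ref{lem:rect_irrelevant}, by the symmetry of the statement we may assume $s_{1,2}\ge s_{3,4}$. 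The goal is to show that the crossing forces one of $c_3,c_4$ into the minimum pathway of $(O_1,O_2)$; since that obstacle contains its center, this contradicts the relevance of the thin edge $E_{1,2}$.

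To set this up I would first record the geometry of a relevant constraint. In each of the three pathway types of Figure~\ref{fig:min_pathway} (fully overlapping, partially overlapping, diagonal) the minimum pathway is an axis-aligned rectangle straddling the gap between $O_i$ and $O_j$; the estimate in the proof of Lemma~\ref{lem:rect_irrelevant} — that a thin edge of width $s$ and height $h$ has a pathway reaching at most $s-h$ past each obstacle in the transverse direction — shows that any obstacle whose center lies between $O_i$ and $O_j$ in the gap direction and within $L_\infty$-distance $s_{i,j}$ of both endpoints necessarily lies inside that rectangle. The task therefore reduces to exhibiting such a center among $\{c_3,c_4\}$. Here I would use that $p$ lies on both segments: relabel so that $\norml{c_3-p}\le\tfrac12 s_{3,4}\le\tfrac12 s_{1,2}$, use $\norml{p-c_1}+\norml{p-c_2}=s_{1,2}$ (which holds because $p\in\overline{c_1c_2}$), and use the cyclic order $c_1,c_3,c_2,c_4$ to keep $c_3$ between $c_1$ and $c_2$ in the relevant sense. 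Feeding the resulting bounds on $\norml{c_3-c_1}$ and $\norml{c_3-c_2}$ into the description of the appropriate pathway case should then place $c_3\in\mathrm{minPathway}(O_1,O_2)$, which is the contradiction.

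The step I expect to be the main obstacle is exactly this last fitting-into-the-pathway argument: the "forbidden region" of a relevant edge is not a single clean $L_\infty$ ball but one of three rectangles, so the reasoning has to be carried through each pathway type for $E_{1,2}$ together with each relative orientation (horizontal versus vertical gap) of the two constraints, and in the borderline configurations one must invoke disjointness of the obstacles to ensure that the segments properly cross rather than merely touch. If a uniform $L_\infty$ argument turns out to be too delicate, the fallback is a direct coordinate case analysis: place the gap of $E_{1,2}$ horizontally, split on which of the three shapes $\mathrm{minPathway}(O_1,O_2)$ takes, and in each case verify that a crossing segment $\overline{c_3c_4}$ with $s_{3,4}\le s_{1,2}$ must have an endpoint inside that rectangle, again contradicting relevance.
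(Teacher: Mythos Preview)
Your outline has the right Gabriel-graph instinct, but there is a concrete gap in the arithmetic. You write $\norml{p-c_1}+\norml{p-c_2}=s_{1,2}$ and $\norml{c_3-p}\le\tfrac12 s_{3,4}$, with $s_{i,j}=\norml{O_i-O_j}$ as in Lemma~\ref{lem:rect_irrelevant}. But that redefined $\norml{O_i-O_j}$ is the \emph{gap width} $\max(|x_i-x_j|-(w_i+w_j)/2,\;|y_i-y_j|-(h_i+h_j)/2)$, not the $L_\infty$ distance between centers. Additivity along the Euclidean segment gives $\norml{p-c_1}+\norml{p-c_2}=\norml{c_1-c_2}$, which can exceed $s_{1,2}$ by the (arbitrarily large) half-widths of $O_1,O_2$. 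The same slip appears in the bound for $c_3$: one of $\norml{c_3-p},\norml{c_4-p}$ is at most $\tfrac12\norml{c_3-c_4}$, not $\tfrac12 s_{3,4}$. Because obstacle dimensions enter the two sides asymmetrically, the resulting bounds on $\norml{c_3-c_1},\norml{c_3-c_2}$ do not place $c_3$ inside the minimum pathway, whose extent is governed by $s_{1,2}$ rather than by $\norml{c_1-c_2}$.

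The paper sidesteps exactly this obstacle-size issue by a preliminary reduction: it argues that shrinking obstacles can only make it \emph{harder} for one to fall inside another's minimum pathway, so one may take the diagonal-type endpoints to be points and the overlap-type endpoints to be their intersection segments. After that, gap width and center distance coincide, and a direct case analysis on the three pairings of edge types (both overlap, both diagonal, mixed) finishes the argument. Your fallback ``coordinate case analysis'' would essentially rediscover this, but you would first need the shrinking reduction; without it the per-case inequalities still carry the $(w_i,h_i)$ terms and the Gabriel-style containment does not close.
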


\begin{proof}
As we have mentioned, there are three types of relevant edges. We observe that it is never optimal for the obstacles to have volumes, as it can only increase the chance that an obstacle falls within another obstacle's minimum pathway, thus annulling one of the edges. Hence, we can safely assume that for the diagonal case, the two obstacles are simply points. Furthermore, we combine the partial and complete overlap cases by reducing the rectangles to just two lines covering their intersection. This is valid since it can only improve other obstacles' ability to form edges.

Hence, we are left with three cases, illustrated in Figure \ref{fig:edge_intersections}.
\begin{figure}[h!]
    \centering
    \begin{tikzpicture}[scale=1.25]
        \begin{scope}[shift={(3,0)}]
            \node at (0,2.25) {Case 2a};
            \draw[dashed, draw=orange!60!black, fill=orange!10, line width=0.75pt] (1,0) rectangle ++(-2,2) node[midway] {$S$};
            \draw[edge] (1,0) -- (0,2);
            \draw[edge] (0.05, 0.6) -- (0.9, 1);
            \draw[obs, ultra thick] (1,0) circle (1pt);
            \draw[obs, ultra thick] (0,2) circle (1pt);
            \draw[obs, ultra thick] (0.05,0.6) circle (1pt);
            \draw[obs, ultra thick] (0.9,1) circle (1pt);
            
        \end{scope}
        \begin{scope}[shift={(6,0)}]
            \node at (0,2.25) {Case 2b};
            \draw[dashed, draw=orange!60!black, fill=orange!10, line width=0.75pt] (1,0) rectangle ++(-2,2) node[midway] {$S$};
            \draw[edge] (1,0) -- (0,2);
            \draw[edge] (0.05, -0.2) -- (1.2, 1);
            \draw[obs, ultra thick] (1,0) circle (1pt);
            \draw[obs, ultra thick] (0,2) circle (1pt);
            \draw[obs, ultra thick] (0.05,-0.2) circle (1pt);
            \draw[obs, ultra thick] (1.2,1) circle (1pt);
            
        \end{scope}
        \begin{scope}[shift={(0,0)}]
            \node at (0.5,2.25) {Case 1};
            \draw[edge] (0.3,0) rectangle (0.6,2);
            \draw[edge] (0.1,1) rectangle (0.9, 1.5);
            \draw[obs,ultra thick] (0.1,1) -- (0.1,1.5);
            \draw[obs,ultra thick] (0.9,1) -- (0.9,1.5);
            \draw[obs,ultra thick] (0.3,2) -- (0.6,2);
            \draw[obs,ultra thick] (0.3,0) -- (0.6,0);
        \end{scope}
        \begin{scope}[shift={(8,0)}]
            \node at (0.5,2.25) {Case 3a};
            \draw[edge] (0, 1) rectangle (1, 1.3);
            \draw[edge] (0.25,1.8) -- (0.75,0.5);
            \draw[obs, ultra thick] (0.25,1.8) circle (1pt);
            \draw[obs, ultra thick] (0.75,0.5) circle (1pt);
            \draw[obs,ultra thick] (0,1) -- (0,1.3);
            \draw[obs,ultra thick] (1,1) -- (1,1.3);
        \end{scope}
        \begin{scope}[shift={(10,0)}]
            \node at (0.5,2.25) {Case 3b};
            \draw[edge] (0, 1) rectangle (1, 1.3);
            \draw[edge] (-0.5,1.2) -- (1.5,1.1);
            \draw[obs, ultra thick] (-0.5,1.2) circle (1pt);
            \draw[obs, ultra thick] (1.5,1.1) circle (1pt);
            \draw[obs,ultra thick] (0,1) -- (0,1.3);
            \draw[obs,ultra thick] (1,1) -- (1,1.3);
        \end{scope}
    \end{tikzpicture}

    \caption{The cases of edge intersections.}
    \label{fig:edge_intersections}
\end{figure}
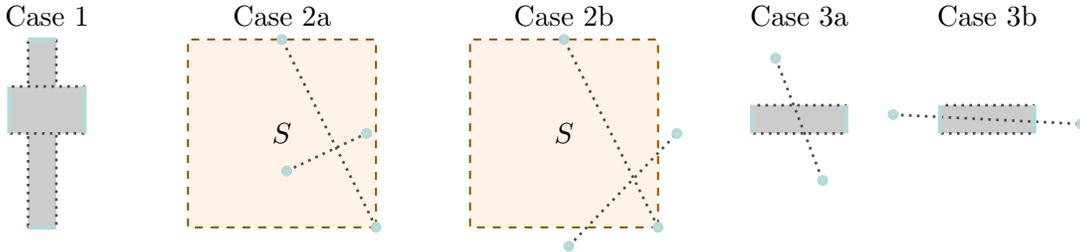

\begin{itemize}
    \item Both the $O_1$ and $O_2$ edge and $O_3$ and $O_4$ edge are axis-aligned. We can decompose each edge into their two bounding lines. We observe that for the two thin edges to intersect, at least one out of the four pairs of bounding lines must intersect. Hence, we reduce our problem to assuming that $O_1, O_2, O_3, O_4$ are all point obstacles too. We note that this then becomes a subset of the case below, where both edges are point obstacles that are ``diagonal'' to each other.
    \item Both the $O_1$ and $O_2$ edge and $O_3$ and $O_4$ edge are of diagonal type. We again denote $s_{1,2} = \norml{O_1 - O_2}$ and $s_{3,4} = \norml{O_3 - O_4}$. Without the loss of generality, let $s_1 \geq s_2$. Let us imagine that $O_1$ is on the top left and $O_2$ is on the bottom right. We first focus on $O_2$. 
    
    Let $S$ be a square with width $s_{1,2}$ and with $O_2$ as the bottom right corner, then this square is within the minimum pathway. We now have two cases. If either $O_3$ and $O_4$ are within the square, then we are done. Otherwise, there are two cases in order for an intersection to occur. 
    
    If $O_3$ and $O_4$ are on opposing sides of the square, this is a contradiction since then $\norml{O_3 - O_4} > \norml{O_1 - O_2}$ (note that since the minimum pathway is inclusive on edge points, equality cannot happen). Otherwise, $O_3$ and $O_4$ are on adjacent sides, more specifically, either on the sides of $O_1$ or $O_2$. We assume they are next to $O_2$, as repeating this proof analogously with $O_1$ and $O_2$ clipped should cover all cases. Then, for the edges to intersect, one point has to be higher than $O_2$ and another point has to be to the left of $O_2$. This necessarily means that the rectangle formed by endpoints $O_3$ and $O_4$ contains the $O_2$ corner of the square, a contradiction.
    \item Without the loss of generality, the $O_1$ and $O_2$ edge is diagonal, but the $O_3$ and $O_4$ edge is axis-aligned. We again decompose the $O_3$ and $O_4$ edge into the two bounding lines. If the $O_1$ and $O_2$ edges intersect either bounding lines, then our diagonal case above finishes the proof. The only remaining case is when the diagonal edge is between them. If either points are within the axis-aligned pair's thin edge, then we are done. Otherwise, they must both be outside on opposing sides. However, their minimum pathway would then necessarily intersect with $O_3$ and $O_4$. 
\end{itemize}
Hence we have exhausted all cases, so we are done.
\end{proof}

Since no two edges ever intersect, this implies that our resulting graph is planar. We have also shown that by only taking relevant edges, our graph is a valid region partitioning scheme. Finally, we introduce an algorithm to efficiently identify these relevant edges.

\subsection{Finding a region partitioning scheme for rectilinear obstacles}
\label{sec:partition_algorithm}
\begin{theorem}
    We can find all relevant edges and construct the region partitioning and dual planar graph in $\mathcal{O}(n \log n)$ time using a sweep line algorithm.
\end{theorem}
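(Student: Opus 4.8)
The plan is to establish the $\mathcal{O}(n \log n)$ bound by a standard sweep line argument, where the two main jobs are (i) identifying, for each obstacle, the small set of neighbors with which it can form a relevant edge, and (ii) showing that relevance of a candidate edge can be certified locally by the sweep. First I would set up the sweep line moving (say) top-to-bottom over the $2n$ horizontal boundaries of the obstacles, maintaining a balanced-BST status structure ordered by $x$-coordinate that records the obstacles currently intersected by the sweep line. The key structural observation driving the algorithm is the one already proved in Lemma~\ref{lem:rect_irrelevant}: if an obstacle $O_k$ lies in the minimum pathway of $O_i$ and $O_j$, then the $(i,j)$ edge is irrelevant. Contrapositively, a relevant edge between $O_i$ and $O_j$ forces the strip of the plane between them (in the appropriate dimension) to be empty, which means $O_i$ and $O_j$ must be ``visible'' to each other along an axis direction — so the candidate relevant edges incident to a fixed obstacle correspond to its immediate neighbors in the horizontal and vertical visibility decomposition, of which there are $\mathcal{O}(1)$ amortized and $\mathcal{O}(n)$ in total.

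Next I would handle the two edge types separately. For axis-aligned (overlapping) edges: when an obstacle $O_i$ enters the status structure, its left and right neighbors in the BST are the only candidates for a vertical thin edge with $O_i$; I would test the gap and, crucially, use the minimum-pathway emptiness condition — which by construction is exactly a rectangle of bounded aspect ratio — to decide relevance, charging each such test to an insertion or deletion event. For diagonal edges, the candidate pairs are obstacles that are ``corner-adjacent'': when the sweep passes a corner of $O_i$, the obstacle occupying the adjacent quadrant region (found again via neighbor queries in the status structure, possibly augmented to also track the most recently departed obstacle in each slab) is the unique diagonal-edge candidate in that direction. In both cases, verifying that no third obstacle intrudes into the minimum pathway reduces to a single predecessor/successor query against the current status structure, because Lemma~\ref{lem:rect_irrelevant} guarantees any intruding $O_k$ would itself be closer and hence would appear between $O_i$ and $O_j$ in the relevant ordering at the moment the sweep line crosses the pathway.

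Having collected the $\mathcal{O}(n)$ relevant edges, I would then build the planar subdivision they induce together with the obstacles: by Lemma~\ref{lem:planar_constraints} these edges pairwise do not cross, so the arrangement of obstacle boundaries plus relevant edges is a planar graph with $\mathcal{O}(n)$ vertices and edges, and its faces — extracted in $\mathcal{O}(n)$ time by a standard doubly-connected-edge-list traversal, or incrementally maintained during the sweep — are exactly the regions of the partition. Finally I would construct the dual $G^*$ by creating one vertex per face and one dual edge per relevant thin edge, weighting it by the bottleneck width (i.e.\ the generalized $L_\infty$ gap $\norml{O_i - O_j}$), all in $\mathcal{O}(n)$ additional time; correctness of this as a valid region partitioning is precisely what Theorem~\ref{thm:constraint} together with Lemma~\ref{lem:rect_irrelevant} gives us.

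**The hard part will be** the bookkeeping that makes ``only $\mathcal{O}(1)$ amortized candidate edges per event'' actually rigorous: one must argue that every relevant edge — diagonal ones in particular — is discovered at some event of the sweep as an adjacency in the (augmented) status structure, which requires a careful case analysis mirroring the three cases of Figure~\ref{fig:edging} and an appeal to the minimum-pathway emptiness to rule out that a relevant pair is ``hidden'' behind an intervening obstacle in the status order. The sorting of the $2n$ event coordinates and the $\mathcal{O}(\log n)$ per-operation cost of the balanced BST then give the $\mathcal{O}(n \log n)$ total, with everything after edge collection being linear.
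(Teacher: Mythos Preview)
Your approach differs substantially from the paper's, and there is a gap in the relevance-verification step. The paper does \emph{not} try to certify relevance during the sweep. Instead it introduces a \emph{shadow region} for each obstacle (Definition~\ref{def:shadow_region}): processing obstacles by increasing $x_1$, it shows (Lemma~\ref{lem:shadow_region}) that once $O_j$ lies in the shadow region of the current $O_i$, any later $O_k$'s minimum pathway to $O_j$ necessarily contains $O_i$, so $O_j$ may be deleted from the active set. This yields a clean charging argument---at most $n$ candidate edges per pass, since each new edge deletes an obstacle---and the eight octants are handled by repeating the sweep under the four rotations and two reflections. The output is only a \emph{superset} of the relevant edges; a separate offline pass (coordinate compression plus a second sweep with range updates) then tests, for each of the $\mathcal{O}(n)$ candidate minimum-pathway rectangles, whether it meets any obstacle, and discards the irrelevant ones.

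The specific gap in your plan is the claim that a single predecessor/successor query in the status structure detects any intruding $O_k$ in the minimum pathway. Lemma~\ref{lem:rect_irrelevant} only tells you that an intruder is closer in the generalized $L_\infty$ sense; it does not imply that $O_k$ sits between $O_i$ and $O_j$ in the $x$-ordering at the event time, nor even that $O_k$ intersects the sweep line when you test the pair---for diagonal pathways in particular, an intruder can lie entirely above or below the sweep line at the corner event. The paper sidesteps exactly this difficulty by deferring all relevance tests to the offline batch query. Your diagonal-edge discovery via ``the most recently departed obstacle in each slab'' is likewise underspecified; in the paper it is the $45^\circ$ upper boundary of the shadow trapezoid that makes the diagonal case fall out of the same deletion argument as the axis-aligned cases, and that geometric device has no analogue in your sketch.
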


To make an algorithm, we will need to create some sort of precedence for our obstacles; we do this with the notation of \textit{shadow regions} as described below.

\begin{define}
    \label{def:shadow_region}
    For a given obstacle $O$ described by $(x_1, x_2) \times (y_1, y_2)$, we define the \textbf{shadow region} as the trapezoid bounded by the lines $y = y_1$, $x = x_1$, and $y = y_2 + (x_1 - x)$. (An illustration of this can be found in Figure \ref{fig:shadow_region}.)
\end{define}

\begin{figure}[h!]
    \centering
    \begin{tikzpicture}[scale=0.5]
        \fill[pattern=crosshatch dots, pattern color=gray] (-9,0) -- (5,0) -- (5,3) -- (-2,10) -- (-9, 10) -- cycle;
        \draw[obs] (5,0) rectangle (12,3) node[midway] {\large $O_j$};
        \draw[obs] (-8,4) rectangle (2,9) node[midway] {\large $O_i$};
        \draw[red, line width=1.5pt] (-8,4) -- (2,4);
        \draw[dashed, line width=0.75pt] (5,3) -- (-2,10);
        \draw[dashed, line width=0.75pt] (5,0) -- (-9,0);
        \node[fill=white, rounded corners] at (-2,2) {shadow region};
    \end{tikzpicture}
    \caption{Shadow region of $O_j$ (where $O_i$ is part of it).}
    \label{fig:shadow_region}
\end{figure}
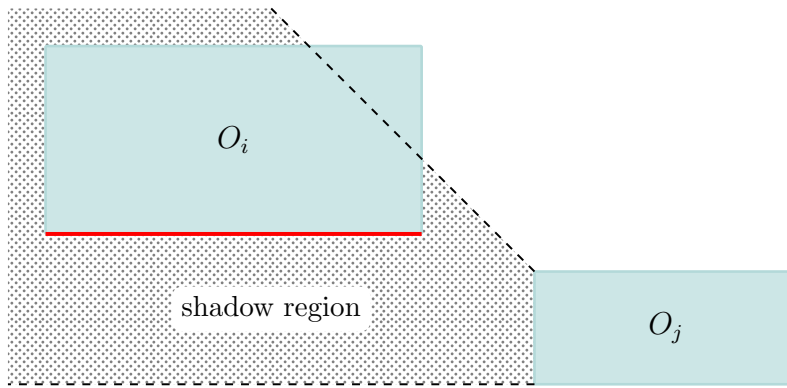

We process our obstacles from left to right. In addition, for obstacle $O_i$, we focus on its connections to other obstacles within the shadow region for now. We claim that once an obstacle falls within another obstacle's shadow region, it can be safely discarded for future obstacles.

\begin{lemma}
    \label{lem:shadow_region}
    If we process the set of obstacles in ascending order of $x_1$, then for a given obstacle $O_i$, all obstacles $O_j$ such that their bottom side $y = y_1$ and $x \in [x_1, x_2]$ is completely within the shadow region become irrelevant for future obstacles. More specifically, for obstacle $O_k$ processed after $O_i$, $O_k$ should never need to form edges with obstacles are also in $O_i$'s shadow region.
\end{lemma}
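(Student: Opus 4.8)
The plan is to argue by a geometric transitivity argument: if $O_j$ lies entirely within the shadow region of $O_i$, then any relevant constraint that a later obstacle $O_k$ could form with $O_j$ would already be ``blocked'' by $O_i$ itself, in the sense that $O_i$ would fall inside the minimum pathway between $O_k$ and $O_j$, rendering that constraint irrelevant by Lemma \ref{lem:rect_irrelevant}. So the heart of the proof is to show: for any $O_k$ processed after $O_i$ (hence with $x_1^k \ge x_1^i$) and any $O_j$ contained in $O_i$'s shadow region, the obstacle $O_i$ intersects the minimum pathway between $O_j$ and $O_k$.

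First I would set up coordinates using the shadow-region definition: $O_j$'s bottom side sits below the line $y = y_1^i$ and to the left of the line $x = x_1^i$ and below the slanted line $y = y_2^i + (x_1^i - x)$ passing through the top-left corner of $O_i$ with slope $-1$. Since $O_k$ is processed after $O_i$, we have $x_1^k \ge x_1^i$, so $O_k$ lies (weakly) to the right of the vertical line $x = x_1^i$. Now consider the three cases of the minimum pathway (complete overlap, partial overlap, diagonal) between $O_j$ and $O_k$. In each case the minimum pathway is a rectangle whose extent in each coordinate is controlled by $s_{j,k} = \norml{O_j - O_k}$; because $O_k$ is to the right of $x_1^i$ and $O_j$ is to the left of and below the corner region carved out by the shadow's boundary lines, the segment/pathway connecting them must pass through the ``wedge'' that contains $O_i$. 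The slanted boundary $y = y_2^i + (x_1^i - x)$ is exactly what guarantees this: it ensures that the horizontal displacement from $O_j$ to $x_1^i$ is at least the vertical displacement from $O_j$ up to the top of $O_i$, so a pathway wide/tall enough to reach from $O_j$ across to $O_k$ is necessarily wide/tall enough to engulf $O_i$. I would verify this inequality chain in each of the three pathway cases, concluding $O_i$ lies in the minimum pathway of $O_j$ and $O_k$, so that constraint is irrelevant.

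Having established that, I would invoke Lemma \ref{lem:rect_irrelevant} to conclude that the $O_j$–$O_k$ edge never affects feasibility and can be discarded, which is precisely the claim; and since this holds for every later $O_k$, it is safe to drop $O_j$ from the sweep-line's active structure once it enters some $O_i$'s shadow region. I expect the main obstacle to be the case analysis in the partial/complete overlap configurations: there the ``minimum pathway'' is not a simple diagonal segment but a rectangle spanning the overlap in one coordinate, and I will need to be careful that the shadow-region boundary lines (especially the slanted one) still force $O_i$ into that rectangle — this is where the precise choice of slope $-1$ and the inclusivity of the pathway on its boundary (so that the strict-versus-nonstrict inequalities line up) really matters. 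A secondary subtlety is handling the degenerate reductions used in Lemma \ref{lem:planar_constraints} (shrinking obstacles to points or lines); I would note that those reductions only help here, since shrinking $O_k$ or $O_j$ only makes it easier for $O_i$ to sit inside the pathway, so it suffices to prove the worst case.
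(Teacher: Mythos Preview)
Your overall strategy---show that $O_i$ lands in the minimum pathway between $O_j$ and any later $O_k$, then invoke Lemma~\ref{lem:rect_irrelevant}---is exactly the paper's. But two concrete issues would derail the argument as you have outlined it.

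\textbf{The shadow-region geometry is flipped.} You write that $O_j$'s bottom side ``sits below the line $y=y_1^i$''. In fact the shadow region lies \emph{above} $y=y_1^i$ (and to the left of $x=x_1^i$, below the slanted line). So $O_j$ is above-left of $O_i$, not below-left; the paper's first sentence is ``$O_j$ is above $O_i$''. Your subsequent phrase ``vertical displacement from $O_j$ up to the top of $O_i$'' has the direction backwards for the same reason. The correct reading of the slanted boundary gives $x_1^i - x_2^j \ge y_1^j - y_2^i$, i.e.\ the horizontal gap from $O_j$ to $O_i$ dominates the vertical gap from the top of $O_i$ \emph{up to} the bottom of $O_j$. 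This is what forces $\norml{O_j-O_i}$ and $\norml{O_j-O_k}$ to be realized in the $x$-direction, which is the crux of the inequality chain.

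\textbf{The unconditional claim is false and needs a case split you do not mention.} You assert that for \emph{every} later $O_k$, the obstacle $O_i$ intersects the minimum pathway between $O_j$ and $O_k$. This fails: take $O_k$ above $O_j$ (in the sense $y_1^k > y_1^j$); then the $O_j$--$O_k$ pathway can sit entirely above $O_i$ and miss it. The paper handles this not by proving containment but by \emph{dismissing} it: if $O_k$ is above $O_j$, then $O_j$ is not in $O_k$'s shadow region, so this pass of the sweep would never attempt to form the $O_j$--$O_k$ edge anyway (it is the job of one of the other seven rotated/reflected passes). The paper's actual case analysis is on the vertical position of $O_k$ relative to $O_i$ and $O_j$ (below $O_i$; above $O_j$; in between), not on the three minimum-pathway types you propose. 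Only the ``in between'' case $O_j \ge O_k \ge O_i$ requires the slope-$(-1)$ inequality; the other two are either trivial or dismissed by scope. Without this split, your plan to ``verify the inequality chain in each of the three pathway cases'' will hit a configuration where the chain simply does not hold.
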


\begin{proof}
    We first note that $O_j$ is above $O_i$ (where above/below is measured by the bottom side). We then note that if obstacle $O_k$ is below $O_i$, then since $O_j$ is above $O_i$, we get $O_j \geq O_i \geq O_k$. Consequently, the minimum pathway between $O_k$ and $O_j$ will necessarily intersect $O_i$. If $O_k$ is above $O_j$, then $O_j$ will not be in $O_k$'s shadow region, so we do not care about this case. Hence, the only case left is when $O_j \geq O_k \geq O_i$, illustrated in Figure \ref{fig:shadow_region_ordering}. 
    
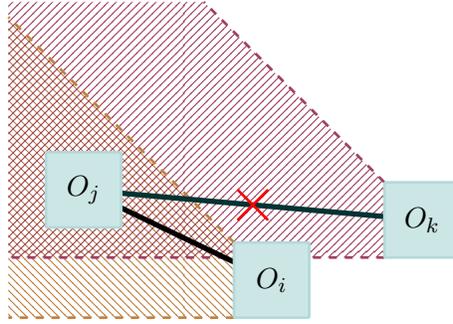
\begin{figure}[h!]
    \centering
    \begin{tikzpicture}[decoration={
    markings,mark=at position .5 with \node[red]{\Huge $\times$};}]
    \fill[pattern=north west lines, pattern color= orange!50!gray] (0,0) -- (0,1) -- (-3,4) -- (-3,0) -- cycle;

    \fill[pattern=north east lines, pattern color=purple!50!gray] (2,0.8) -- (2,1.8) -- (-0.4,4.2) -- (-3,4.2) -- (-3,0.8) -- cycle;
    
        \draw[dashed, orange!50!gray, line width=1pt] (0,0) -- (-3, 0);
        \draw[dashed, purple!50!gray, line width=1pt] (2,0.8) -- (-3,0.8);
        \draw[dashed,  orange!50!gray, line width=1pt] (0,1) -- (-3,4);
        \draw[dashed, purple!50!gray, line width=1pt] (2,1.8) -- (-0.4,4.2);
        
        \draw[black, line width=2pt] (0.5,0.5) -- (-2,1.7);
        \draw[teal!40!black, line width=2pt, postaction={decorate}] (-2,1.7) -- (2.5,1.3);
        
        \draw[obs] (0,0) rectangle ++(1,1) node[midway] {$O_i$};
        \draw[obs] (-2.5,1.2) rectangle ++(1,1) node[midway] {$O_j$};
        \draw[obs] (2,0.8) rectangle ++(1,1) node[midway] {$O_k$};
    \end{tikzpicture}

    \caption{The case when $O_j \geq O_k \geq O_i$.}
    \label{fig:shadow_region_ordering}
\end{figure}

    However, we claim that the minimum pathway between $O_j$ and $O_k$ will always contain $O_i$, thus rendering the edge irrelevant. Due to the upper slope of the shadow region (roughly $y = x$), the bottleneck width will always be along the $x$ direction. As in
    $$ \norml{O_j - O_k} = x^k_1 - x^j_2 $$
    Likewise, the same applies to $O_j$ and $O_i$
    $$ \norml{O_j - O_i} = x^i_1 - x^j_2 \geq y^j_1 - y^i_2 $$
    However, the minimum pathway between $O_j$ and $O_k$ contains a square with $O_j$ as the top left corner and side length $x^k_1 - x^j_2$. Since we process the squares by $x_1$, this quantity must be greater than or equal to $x^i_1 - x^j_2$. Hence, the triangle extends below to the following $y$-coordinate
    $$ y^j_1 - (x^k_1 - x^j_2) \leq y^j_1 - (x^i_1 - x^j_2) \leq y^j_1 - (y^j_1 - y^i_2) = y^i_2 $$
    Hence, the minimum pathway must extend below the top of $O_i$, rendering the edge irrelevant.
\end{proof}

\begin{cor}
    There can be at most $n$ edges from the outlined procedure.
\end{cor}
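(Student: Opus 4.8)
The plan is to prove this as an immediate consequence of Lemma~\ref{lem:shadow_region} via a charging argument: I will assign each edge produced by the sweep to one of its two endpoints, and then show that no obstacle is ever assigned two edges, so that the total number of edges cannot exceed the number of obstacles.

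First I would fix the total order in which the sweep processes obstacles --- ascending $x_1$, with ties broken by any fixed rule --- and recall from the description of the procedure that when $O_i$ is processed, the only edges created at that step connect $O_i$ to (already-processed) obstacles lying in $O_i$'s shadow region. For such an edge $\{O_i, O_j\}$ the partner $O_j$ is strictly earlier than $O_i$ in the processing order, so I can unambiguously define the \emph{earlier endpoint} of every edge, and charge the edge to it. Since each edge is created at exactly one processing step, every edge gets exactly one charge, and it suffices to bound the number of obstacles that get charged at least once, by showing each gets charged at most once.

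Next I would invoke Lemma~\ref{lem:shadow_region} at the moment the edge $\{O_i, O_j\}$ is created: at that instant $O_j$ lies in $O_i$'s shadow region, so the lemma guarantees that $O_j$ becomes irrelevant to \emph{every} obstacle processed after $O_i$ --- no later obstacle will form an edge with $O_j$. On the other hand, any edge $\{O_j, O_m\}$ for which $O_j$ is the earlier endpoint is, by the structure of the procedure, created when the strictly later obstacle $O_m$ is processed. Putting these together, $O_j$ can be the earlier endpoint of only the first such edge ever created; after that it is permanently retired. Hence the map sending each edge to its earlier endpoint is injective, and the number of edges is at most $n$, the number of obstacles.

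The one point that needs care is the implicit link between ``the procedure creates the edge $\{O_i,O_j\}$'' and ``$O_j$ lies in $O_i$'s shadow region in the precise sense of Lemma~\ref{lem:shadow_region}'' (namely that the whole bottom side of $O_j$ sits inside the trapezoid). I would make this explicit by observing that the active obstacles the sweep inspects for $O_i$ are exactly those whose bottom side lies in $O_i$'s shadow region, so only those can become edge partners of $O_i$, after which the lemma retires every one of them. Once that is pinned down the charging is verbatim; as a sanity check one may note that planarity (Lemma~\ref{lem:planar_constraints}) alone would only give the weaker $3n-6$, and it is precisely the shadow-region retirement property that sharpens the bound to the clean constant $1$.
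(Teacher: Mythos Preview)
Your charging argument is correct and is the same idea as the paper's proof, just phrased more elaborately: the paper simply observes that the procedure, when processing $O_i$, deletes every $O_j$ found in its shadow region at the moment the edge $\{O_i,O_j\}$ is added, so each edge creation is paired with a unique obstacle deletion and hence there are at most $n$ edges. Your invocation of Lemma~\ref{lem:shadow_region} is not actually needed for the count --- the lemma justifies that the deletion loses no relevant edges, but the bound itself follows purely from the mechanics of ``one edge in, one obstacle out.''
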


\begin{proof}
Our algorithm processes the obstacles in ascending $x_1$ order. Then, for a given obstacle $O_i$, we find all obstacles $O_j$ within the shadow region and add an edge between $O_j$ and $O_i$. Finally, we delete all of those obstacles. Since every time we add an edge, we delete an obstacle, there are at most $n$ total edges.
\end{proof}

Implementation-wise, we can execute this algorithm using a sweep line algorithm. We claim that obstacles within the shadow region must have consecutive $y_1$ values. Assume otherwise, when processing $O_i$, it includes $O_j$ but not $O_k$ where $y^j_1 > y^k_1$. If $x^j_1 \leq x^k_1$, then processing $O_k$ should have deleted $O_j$ already as it would be in the shadow region of $O_k$. Otherwise, the scenario is impossible since the shadow region is monotonically expanding with respect to $x$. Thus, we can keep track of an ordered set (using your balanced binary search tree of choice) storing pairs of $(y^j_1, O_j)$. When processing $x_i$, we find the lower bound with $y^i_1$ and keep deleting obstacles and adding edges until it exceeds our threshold. Each operation takes $\mathcal{O}(\log n)$ time, so the total runtime should be $\mathcal{O}(n \log n)$.

However, for a given obstacle, our algorithm currently only considers edges within the shadow region, which is only about one-eighth of the entire plane of possibilities. However, there's an easy fix — we repeat the algorithm eight times by rotating the plane in all four directions and flipping upside down too. One can verify that this should cover all possible edges, and there are at most $8n$ edges in the end.

Nevertheless, there is one more condition we must satisfy: the planar property. We have shown that the edges we found must contain all relevant edges, but some of them can be irrelevant, too, which would mess up our planar graph criterion. To resolve this issue, we do an extra cleanup step at the end. Namely, we check for each edge if it is relevant. To do this efficiently, we first find the minimum pathway rectangle for each edge. Then, determining their relevance is equivalent to querying if the rectangle intersects with any obstacles. Since our obstacles are fixed, we can answer these queries off-line. This is a standard computational geometry problem that can be done in $\mathcal{O}(n \log n)$ time. A sketch of such an algorithm is as follows:
\begin{enumerate}
    \item Perform coordinate compression such that all coordinates are integers between $[0, 2n - 1]$.
    \item Sweep lining from left to right. If we encounter a left side, we range increment $[y_1, y_2]$, and likewise decrement for the right side.
    \item To answer queries, we check if the sum is $1$ at the left side. Otherwise, we append this interval to another ordered set. Whenever we process other rectangles, we first check if it intersects with any queries. If it does, we mark those edges as irrelevant and delete them from our set.
\end{enumerate}

We omit the proof of correctness here as it is quite standard, but one can verify that it works in $\mathcal{O}(n \log n)$ time. Furthermore, maintaining these data structures should each take $\mathcal{O}(n)$ space. 

This gives us a list of all relevant edges. As we have proven in Lemma \ref{lem:planar_constraints}, the resulting edges form a planar graph, and, therefore, a valid partitioning scheme. This concludes our algorithm, which runs in $\mathcal{O}(n \log n)$ time and $\mathcal{O}(n)$ space. 

Finally, we introduce a data structure to answer our online queries in $\mathcal{O}(\log n)$ time.

\subsection{Answering queries in $\mathcal{O}(\log n)$ with persistent disjoint-set union}
\label{sec:persistent_dsu}
We introduce a data structure on top of our dual planar graph for answering queries in $\mathcal{O}(\log n)$ time and $\mathcal{O}(n)$ space. Recall from our formulation in Theorem \ref{thm:constraint} that to answer a query in the form of $(s, t, r)$, where a robot of size $r$ starts at $s$ and wants to reach destination $t$, it is equivalent to checking there exists a sequence of regions such that the first region contains $r$, the last region contains $t$, and our robot can fit through all the edges that connect these regions. This is equivalent to checking whether a path exists in the dual planar graph such that the maximum capacities of edges along the path are all at least as large as $r$. In other words, on the induced sub-graph that only contains edges of capacity $r$ or greater, we need to determine if the nodes corresponding to the regions of $s$ and $t$ are in the same connected component. We claim this can be maintained using a persistent disjoint-set union set data structure.

\begin{theorem}
    Given a region partitioning scheme and its dual planar graph, we can answer online queries of whether a robot of size $r$ can start at point $s$ and reach destination point $t$ in $\mathcal{O}(\log n)$ and $\mathcal{O}(n)$ total space.
\end{theorem}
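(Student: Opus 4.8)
\emph{Proof proposal.} The plan is to reduce the query to a connectivity test in a weight-thresholded subgraph of the dual planar graph and then to support that test with a \emph{persistent} disjoint-set union, exploiting the fact that the DSU is only ever built up, never torn down, so that persistence comes essentially for free. By Theorem~\ref{thm:constraint} together with the pairwise-passability characterization (a shared boundary of bottleneck width $w$ is passable by a robot of side $r$ exactly when $r \le w$), a query $(s,t,r)$ is feasible if and only if, in the subgraph $G^*_{\ge r}$ of $G^*$ obtained by discarding all edges of weight less than $r$, the region containing $s$ and the region containing $t$ lie in the same connected component; the degenerate case in which $s$ and $t$ share a region is just the $k=1$ instance and is always feasible. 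So it suffices to (i) locate the two regions and (ii) test connectivity of two fixed vertices after thresholding the edge weights.

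For step (i) I would build, during preprocessing, a standard planar point-location structure on the subdivision of $\mathbb{R}^2$ induced by the obstacles and the relevant thin edges; this takes $\mathcal{O}(n\log n)$ time and $\mathcal{O}(n)$ space and answers a location in $\mathcal{O}(\log n)$, with the unbounded outer face and the non-convex rectilinear region shapes posing no obstacle. For step (ii) I would sort the $m = \mathcal{O}(n)$ edges of $G^*$ by non-increasing weight, $w_1 \ge w_2 \ge \cdots \ge w_m$, insert them one at a time into a DSU that uses union by size (or rank) \emph{without path compression}, and record for each node $v$ the step $\mathrm{ts}[v]$ at which its parent pointer is first written ($\mathrm{ts}[v] = \infty$ if $v$ ends as a root). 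The key observation is that in such an incremental DSU each node is re-parented \emph{at most once} over the whole sequence --- once a node loses root status it never regains it --- so the parent array is write-once and the pair $(\mathrm{parent}[v], \mathrm{ts}[v])$ already encodes the entire history, requiring no extra versioning space. Furthermore, at the moment $v$ is linked its new parent is a root, so timestamps strictly increase along every root-to-leaf path; hence the operation $\mathrm{Find}(v,\theta)$ --- walk up from $v$, following $\mathrm{parent}[v]$ only while $\mathrm{ts}[v] \le \theta$ and halting at the first strictly larger timestamp --- returns exactly the root of $v$ in the forest produced by the first $\theta$ unions, and by the union-by-size height bound this walk visits only $\mathcal{O}(\log n)$ nodes.

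To answer $(s,t,r)$ I then point-locate $s$ and $t$ to their region-vertices $R_s,R_t$, binary-search the sorted weight list for $\theta(r) := \#\{\, i : w_i \ge r \,\}$, and report ``feasible'' iff $\mathrm{Find}(R_s,\theta(r)) = \mathrm{Find}(R_t,\theta(r))$; this also correctly handles $R_s = R_t$. Each of the three steps costs $\mathcal{O}(\log n)$, and the preprocessed data --- the point-location structure, the sorted weight array, and the $\mathrm{parent}$, $\mathrm{size}$, and $\mathrm{ts}$ arrays --- together occupy $\mathcal{O}(n)$ space, which gives the stated bounds. Ties among equal edge weights need no special care, since $\theta(r)$ counts all edges of weight at least $r$ independently of their internal ordering.

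The step I expect to be the crux, and would write out in full, is the correctness and running time of the persistent $\mathrm{Find}$: one must establish the write-once property of the parent array and the monotonicity of timestamps along root paths (this monotonicity is precisely what lets a single upward walk recover the DSU state at an arbitrary earlier time), and one must still secure the $\mathcal{O}(\log n)$ tree-height guarantee from union-by-size even though path compression --- which would break persistence --- is disallowed. Everything else (the reduction, off-the-shelf point location, binary search on weights) is either quoted from Theorem~\ref{thm:constraint} or routine.
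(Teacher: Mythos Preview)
Your proposal is correct and follows essentially the same approach as the paper: sort the dual-graph edges by decreasing weight, insert them into a persistent DSU, binary-search the weight list for the threshold index, and test whether the two region-vertices share a root at that version. You supply detail the paper either glosses over or outsources---explicit planar point location for mapping $s,t$ to their regions, and a concrete union-by-size, no-path-compression construction of the persistent DSU (with the write-once parent array and monotone timestamps) in place of the paper's bare citation---but the strategy is identical.
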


\begin{proof}
    We maintain a persistent disjoint-union set data structure that supports the following operations:
    \begin{itemize}
        \item \textsc{Union} nodes $u$ and $v$ into the same component at time step $t$, which increments by one for each union operation.
        \item \textsc{Query} whether nodes $u$ and $v$ are part of the same component at a given time step $t$.
    \end{itemize}
    As given in \cite{DavidKarger}, each query to our persistent DSU takes $\mathcal{O}(\log n)$ time and $\mathcal{O}(n)$ total space, where $n$ is the number of nodes. We first sort the list of edges in our dual planar graph in descending order of their capacity. For some edge $(u, v)$, we then add back the edge by performing a union operation on nodes $u$ and $v$. This should take $\mathcal{O}(n \log n)$ preprocessing time since (as we have shown earlier) the planar graph has $\mathcal{O}(n)$ edges. We also keep a sorted list of all of the edges' capacities and their corresponding union time steps.

    Now, we can answer queries of the form $(r, s, t)$ in $\mathcal{O}(\log n)$ time. We first find the nodes $u$ and $v$ corresponding to $s$ and $t$. Then, we find the largest time step in our sorted list such that the corresponding maximum capacity is at least $r$, which can be done in $\mathcal{O}(\log n)$ time with binary search. Finally, we check if $u$ and $v$ are part of the same component. If they are, then that means that there exists a path that only traverses edges with capacities of at least $r$, as desired. This concludes our algorithm. 
\end{proof}

Finally, we combine this with our sweep line algorithm from Section \ref{sec:partition_algorithm}, which allows us to pre-process everything in $\mathcal{O}(n \log n)$ time and $\mathcal{O}(n)$ space. Then, we can answer online feasibility queries for varying robot sizes in $\mathcal{O}(\log n)$ time, as desired.

\section{Experiments}
\begin{figure}[h!]
    \centering
    \includegraphics[width=0.28\linewidth]{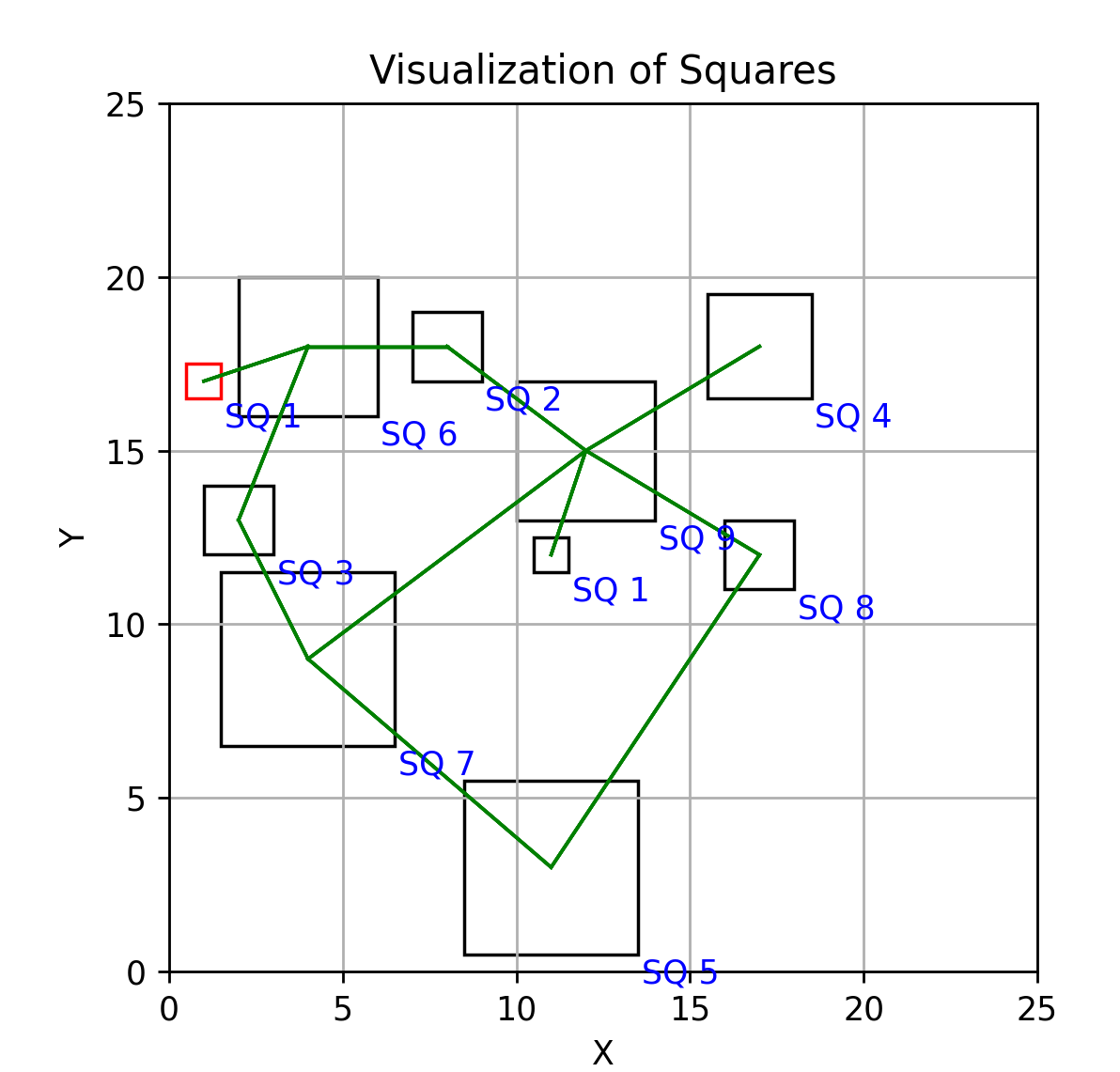}
    \includegraphics[width=0.3\linewidth]{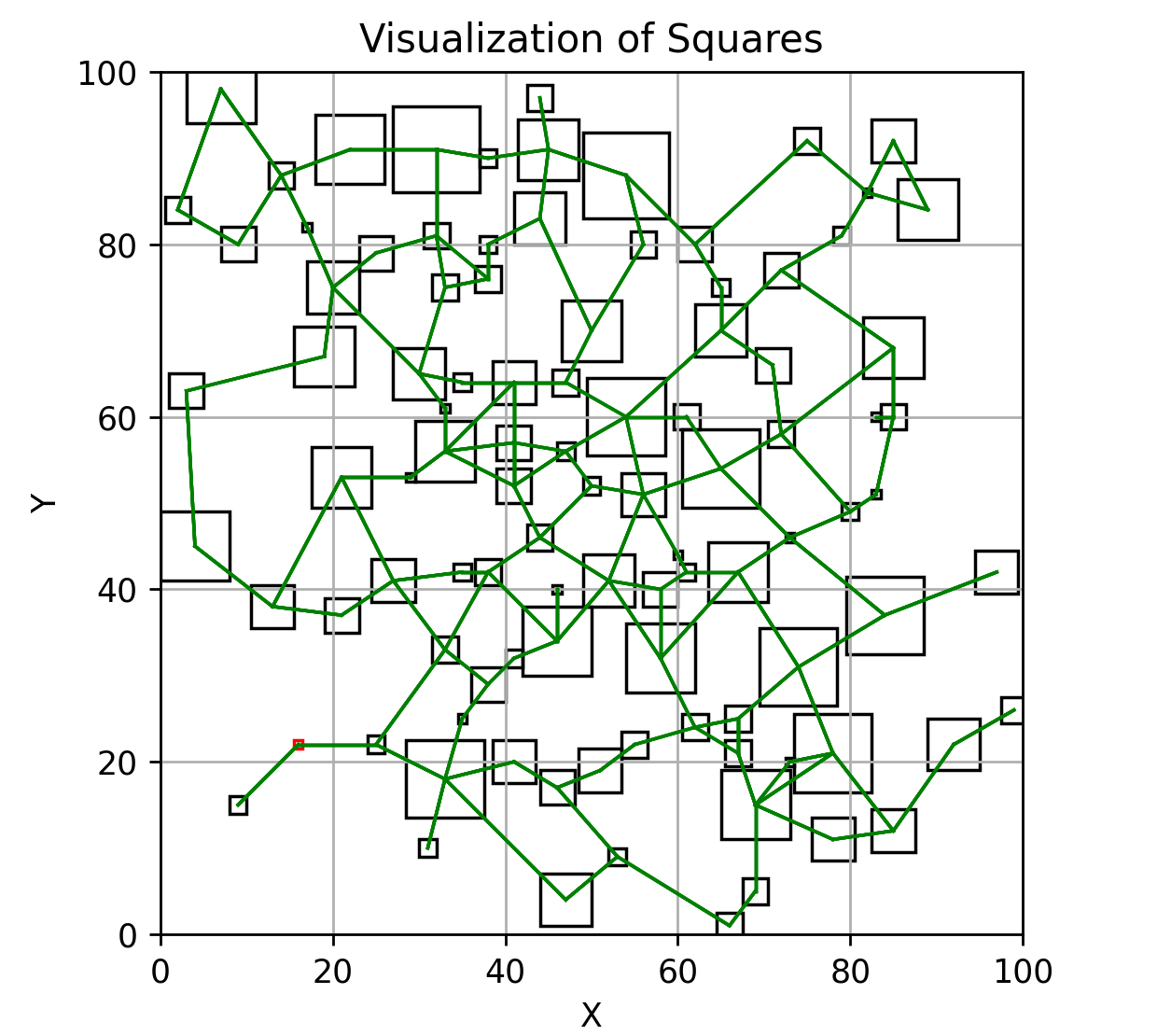}
    \includegraphics[width=0.3\linewidth]{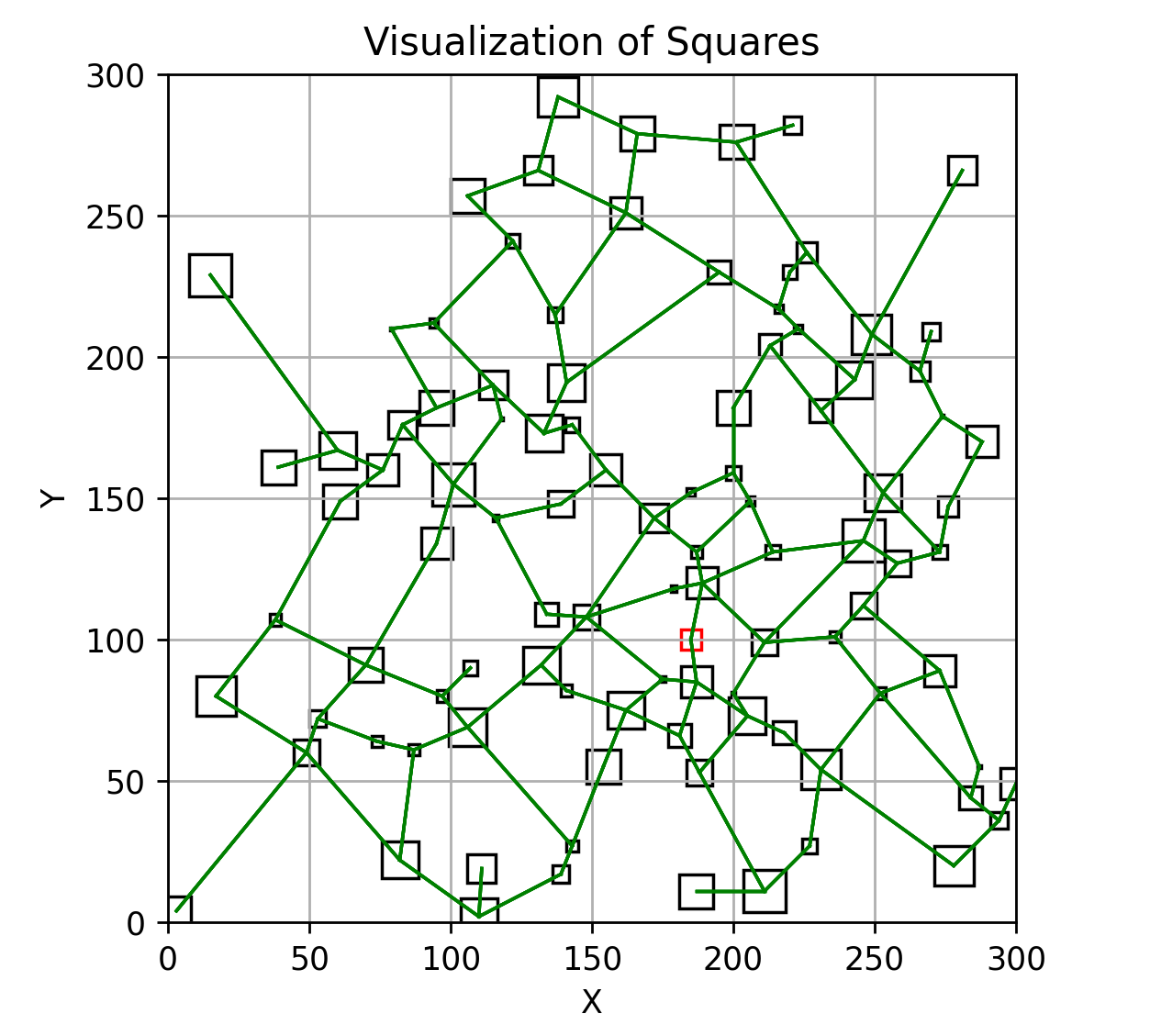}
    \caption{Examples of our partitions on various rectilinear obstacle maps.}
    \label{fig:enter-label}
\end{figure}
We implemented our algorithm in C++ and tested it against random sets of obstacles. In addition, we implemented a slower algorithm that explores all possible paths using a DFS-based pathfinding algorithm to verify correctness. Due to time constraints and the ease of generation, our testcases only contain squares. After countless runs of various parameters and landscape generation schemes (completely random, clustering, and mazes, etc.), we found no discrepancy between the results generated by the two algorithms.


\section{Conclusions and Future Directions}
Our paper's main contribution lies in our introduction of the generalized Gabriel graph framework. We offer both an intuitive and formalized definition for partitioning graphs into relevant edge boundaries and regions, as well as their dual planar graph representation. Furthermore, we demonstrate the framework's usefulness by describing an algorithm that efficiently partitions a space of rectilinear obstacles using the generalized Gabriel graph technique in $\mathcal{O}(n \log n)$ time and $\mathcal{O}(n)$ space. We also solve the problem of circular obstacles with the same radii and demonstrate how we can obtain the Gabriel graph using Delaunay Triangulation in $\mathcal{O}(n \log n)$ time. Finally, we present a persistent-DSU-based data structure, which enables us to answer online path feasibility queries in $\mathcal{O}(n \log n)$ time.

Beyond our work here, our framework has much more to offer. For example, we believe that a generalized Gabriel graph definition should exist for circular obstacles with different radii, but with the caveat that the boundaries might be non-linear. Likewise, the same idea can be applied to other shapes and formulations, as long as one can find the appropriate definitions for edges, regions, and minimum pathways.

In addition, by breaking down the space into regions and constructing a dual planar graph, our framework empowers us to perform other common queries as well. For example, we can run the shortest path problem on our dual graph. While this does not account for intra-regional movement, regions are highly flexible and easy to preprocess. Hence, we believe with careful maneuvering, our framework can be leveraged to solve many other complex challenges.

\section{Acknowledgments}
We would like to thank Professor David Karger for teaching the Advanced Algorithms Course at MIT.

\bibliographystyle{abbrv}
\bibliography{bib}

\begin{appendix}
\section{The Minkowski Sum Technique} \label{app:mink}
An equivalent formulation of our motion planning problem can be obtained through what we refer to as the \textit{Minkowski sum technique}, visually shown in Figure \ref{fig:mink}. This transformation allows us to simplify the problem by treating the robot as a point while appropriately expanding the obstacles. Below, we formally define this transformation and prove its equivalence to the original problem formulation.

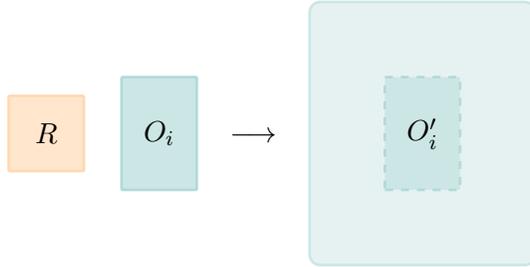
\begin{figure}[h!]
    \centering
    \begin{tikzpicture}
        \draw[robot] (0,0.75) rectangle (1,1.75) node[midway] {$R$};
        \draw[obs] (1.5,0.5) rectangle (2.5,2) node[midway] {$O_i$};
        \node at (3.25,1.2) {$\longrightarrow$};
        \draw[fill=teal!10,draw=teal!20,rounded corners,line width=1 pt] (4,-0.5) rectangle (7,3);
        \draw[obs,dashed] (5,0.5) rectangle (6,2) node[midway] {$O'_i$};
    \end{tikzpicture}
    \caption{A visualization of the Minkowski sum technique.}
    \label{fig:mink}
\end{figure}

Given an obstacle $O_i$, we define its \textit{expanded version} $O_i'$ as the Minkowski sum of $O_i$ and the robot $R$ centered at the origin. Given that the robot is a square and the obstacles are rectangles, the expanded obstacle $O_i'$ is also an axis-aligned rectangle with corners $(x_1^i - \frac{d}{2}, y_1^i - \frac{d}{2})$ and $(x_2^i + \frac{d}{2}, y_2^i + \frac{d}{2})$. Specifically, $O_i'$ occupies the region $[x_1^i - \frac{d}{2}, x_2^i + \frac{d}{2}] \times [y_1^i - \frac{d}{2}, y_2^i + \frac{d}{2}]$. We claim that the robot can now be treated as a point.

\begin{theorem}
A configuration $(x,y)$ is collision-free for the square robot $R$ with respect to obstacle $O_i$ if and only if the point $(x,y)$ does not lie within the expanded obstacle $O_i'$.
\end{theorem}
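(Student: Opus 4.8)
The plan is to unwind both sides of the claimed equivalence into explicit inequalities on coordinates and check that they coincide. First I would fix the obstacle $O_i = [x_1^i, x_2^i] \times [y_1^i, y_2^i]$ and a configuration $(x,y)$, and recall that the robot $R$ placed at $(x,y)$ occupies the open box $(x - \tfrac{d}{2}, x + \tfrac{d}{2}) \times (y - \tfrac{d}{2}, y + \tfrac{d}{2})$. The collision-free condition $R(x,y) \cap O_i = \varnothing$ says these two boxes are disjoint. Since both are axis-aligned, disjointness of boxes separates coordinatewise: two axis-aligned boxes are disjoint if and only if they fail to overlap in at least one of the two coordinate projections. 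So I would write $R(x,y) \cap O_i \neq \varnothing$ as the conjunction $(x - \tfrac{d}{2} < x_2^i) \wedge (x_1^i < x + \tfrac{d}{2}) \wedge (y - \tfrac{d}{2} < y_2^i) \wedge (y_1^i < y + \tfrac{d}{2})$, being careful about which inequalities are strict (the robot's region is open, the obstacle's is closed, so the boundary belongs to the obstacle and the intersection is empty when they merely touch).

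Next I would rearrange these four inequalities to isolate $x$ and $y$: collision occurs exactly when $x_1^i - \tfrac{d}{2} < x < x_2^i + \tfrac{d}{2}$ and $y_1^i - \tfrac{d}{2} < y < y_2^i + \tfrac{d}{2}$ simultaneously. That is precisely the statement that $(x,y)$ lies in the interior of the expanded obstacle $O_i' = [x_1^i - \tfrac{d}{2}, x_2^i + \tfrac{d}{2}] \times [y_1^i - \tfrac{d}{2}, y_2^i + \tfrac{d}{2}]$. Taking the contrapositive on both sides gives exactly the theorem: $(x,y)$ is collision-free for $R$ with respect to $O_i$ iff $(x,y) \notin O_i'$ (modulo the usual open/closed boundary convention, which should be stated once and handled consistently). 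Alternatively, and perhaps more cleanly, I would invoke the general fact that for any sets $A, B$, one has $R(x,y) \cap O_i = \varnothing \iff (x,y) \notin O_i \oplus (-R)$ where $\oplus$ is the Minkowski sum, together with the symmetry of a square centered at the origin ($-R = R$), and then just verify that $O_i \oplus R$ has the stated corners by adding the corner coordinates.

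I do not expect a genuine obstacle here; the only thing requiring care is bookkeeping on strict versus non-strict inequalities, i.e. whether a robot whose boundary merely grazes an obstacle counts as a collision. The paper's definition puts the robot's occupied region as an open rectangle and the obstacle as closed, so grazing is \emph{not} a collision and the corresponding boundary points of $O_i'$ should be treated as non-colliding; I would make the convention explicit at the start of the proof and then the two chains of inequalities match up without friction. The whole argument is a few lines of elementary interval arithmetic once the setup is pinned down.
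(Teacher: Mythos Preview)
Your proposal is correct. Your \emph{alternative} route---invoking the general identity $R(x,y)\cap O_i=\varnothing \iff (x,y)\notin O_i\oplus(-R)$ together with $-R=R$---is exactly what the paper does: it proves precisely that identity by the standard two-way contradiction, picking a witness point $p$ in the intersection and writing $(x,y)=p+q$ with $q\in R(0,0)$ (and conversely). Your \emph{primary} route, unwinding both sides into four coordinate inequalities and matching them up, is a genuinely different and more elementary argument: it avoids the abstract Minkowski-sum language entirely and exploits the axis-aligned rectangle structure directly. The trade-off is that your computation is specific to boxes, whereas the paper's argument works verbatim for arbitrary shapes (which is why it is the natural choice in an appendix titled ``The Minkowski Sum Technique''). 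Your care with the open/closed boundary convention is warranted and in fact slightly sharper than the paper, which states $O_i'$ as a closed box even though the Minkowski sum of the closed obstacle with the open robot is open.
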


\begin{proof}
($\Rightarrow$) Suppose configuration $(x,y)$ is collision-free with respect to $O_i$. This means that $R(x,y) \cap O_i = \varnothing$. For the sake of contradiction, assume $(x,y) \in O_i'$. By the definition of Minkowski sum, this means there exist points $p \in O_i$ and $q \in R(0,0)$ such that $(x,y) = p + q$. But this implies that $p = (x,y) - q$ is a point that lies in both $O_i$ and $R(x,y)$, contradicting our assumption that the configuration is collision-free.

($\Leftarrow$) Suppose $(x,y) \not\in O_i'$. Again, for the sake of contradiction, assume $R(x,y) \cap O_i \neq \varnothing$. Then there exists some point $p$ that lies in both $R(x,y)$ and $O_i$. Let $q = (x,y) - p$. By construction, $q \in R(0,0)$ since $R(x,y)$ is just a translation of $R(0,0)$ by vector $(x,y)$. Therefore, $(x,y) = p + q$ where $p \in O_i$ and $q \in R(0,0)$, implying $(x,y) \in O_i'$, which contradicts our assumption.
\end{proof}

Thus, we can transform our original motion planning problem into an equivalent problem where we need to find a path for a point robot among enlarged obstacles. One advantage of this transformation is that it simplifies collision checking; instead of computing intersections between squares, one can simply consider if a point lies within an obstacle in $\mathcal{O}$.

\section{A Proof of Constraint Equivalence} \label{app:constr}
Before proving constraint equivalence, we establish several important lemmas about our region partitioning.

\begin{lemma}
\label{lem:region-connectivity}
For any region $R$ in our partitioning, given any two points $p, q \in R$, there exists a continuous collision-free path connecting them within $R$.
\end{lemma}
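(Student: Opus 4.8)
The plan is to establish Lemma~\ref{lem:region-connectivity} by exploiting the fact that a region $R$, by definition, is a connected open subset of the plane whose boundary consists solely of pieces of obstacles (or pieces of expanded obstacles, after the Minkowski-sum reduction of Appendix~\ref{app:mink}) and pieces of thin edges. The first step is to recall precisely what ``region'' means: after placing the chosen thin edges, the obstacles together with those edges form a planar subdivision, and $R$ is exactly one open face of that subdivision. A single open face of a planar subdivision is, by construction, path-connected (indeed polygonally connected), so there exists \emph{some} continuous path $\gamma\colon[0,1]\to\mathbb{R}^2$ with $\gamma(0)=p$, $\gamma(1)=q$, and $\gamma(s)\in R$ for all $s$. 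The only thing that requires argument is that such a path is automatically collision-free, i.e.\ that $R$ contains no obstacle interior.

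The second and main step is therefore to prove that a region, as we have defined it, is disjoint from every obstacle. I would argue this directly from the definition of the subdivision: a region is one of the connected components of the complement of $\bigl(\bigcup_i O_i\bigr)\cup\bigl(\bigcup \mathcal{E}\bigr)$, where the thin edges are treated as (closed) rectangular strips. By definition of set complement, any point of $R$ lies outside every $O_i$; since the robot has already been reduced to a point via the Minkowski-sum technique (Appendix~\ref{app:mink}), ``collision-free'' for a point configuration means exactly ``not contained in any expanded obstacle.'' Hence every point of the path $\gamma$ above is collision-free, and $\gamma$ is the desired path. One subtlety to address is the boundary: $R$ as an open face does not include its bounding obstacle/edge segments, so points arbitrarily close to an obstacle are allowed but the obstacle interior itself is excluded — this is consistent with our convention that obstacles occupy closed rectangles while the robot's occupied region is an open square, so touching the boundary is permitted.

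The third step is a brief remark handling the degenerate possibility that $p$ or $q$ lies on the boundary of $R$ (e.g.\ on a thin edge of width $0$, or flush against an obstacle). In that case we first perturb $p$ and $q$ slightly into the interior of $R$ — possible since $R$ is open and nonempty near each of them — connect the perturbed points as above, and prepend/append the short straight segments from $p$ and $q$ to their perturbations, which stay collision-free by the same complement argument together with continuity. I would also note that within a single region there are no thin edges crossing it (thin edges are precisely the boundaries between distinct faces, by Lemma~\ref{lem:planar_constraints} and the construction), so no hidden obstruction can appear in the interior.

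The main obstacle here is essentially definitional rather than technical: one must be careful that the notion of ``region'' used in the statement genuinely coincides with ``open face of the obstacle-plus-edge planar subdivision,'' because the informal definition in Section~3.4 (``empty parts of the plane bounded by obstacles or thin edges'') leaves the status of lower-dimensional boundary pieces ambiguous. Once that identification is pinned down, path-connectedness of a single planar face and the complement characterization of collision-freeness do all the work, and no estimates or case analysis on obstacle geometry are needed.
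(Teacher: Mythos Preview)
Your proposal is correct and follows essentially the same approach as the paper: both argue that a region is, by construction, an open connected component of the complement of the obstacles and thin edges, invoke the fact that open connected subsets of $\mathbb{R}^2$ are path-connected, and conclude that any such path is collision-free because obstacles have been removed from $R$. Your write-up is more careful about boundary and degenerate cases than the paper's short proof, but the underlying argument is the same.
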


\begin{proof}
By the construction of our regions, each region $R$ is a connected component of the plane bounded by obstacles and thin edges. Since obstacles and thin edges are removed from the region, $R$ is an open set. Therefore, for any two points $p, q \in R$, we can construct a collision-free path between them by following any continuous curve that stays within $R$. Such a curve always exists because $R$ is path-connected (being an open connected subset of $\mathbb{R}^2$).
\end{proof}

\begin{lemma}
\label{lem:boundary-crossing}
If there exists a collision-free path between any point in region $R_i$ to any point in adjacent region $R_{i+1}$, then there exists a specific point $b$ on their shared boundary such that any point in either region can reach $b$ through a collision-free path.
\end{lemma}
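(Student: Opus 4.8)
The plan is to reduce the ``any point to any point'' passability hypothesis to a statement about a single crossing point on the shared boundary, using the fact that the shared boundary of two adjacent regions is a thin edge $E_{i,j}$, i.e.\ a single axis-aligned rectangle (in fact, since regions are separated by exactly one thin edge, a segment once we account for the robot size via the Minkowski-sum reformulation of Appendix~\ref{app:mink}). First I would invoke the Minkowski-sum technique so that the robot is a point and the obstacles are their expanded versions; in this picture a ``collision-free path'' is just a continuous curve in the open free space, and the shared boundary between $R_i$ and $R_{i+1}$ is (the interior of) a common edge of their closures. Then I would fix an arbitrary witness path $\gamma$ realizing the hypothesis, say from some $p_0 \in R_i$ to some $q_0 \in R_{i+1}$. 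Since $R_i$ and $R_{i+1}$ are on opposite sides of the thin edge and $\gamma$ is continuous, $\gamma$ must cross the shared boundary; let $b$ be (the parameter of) such a crossing point, which lies in the relative interior of the thin edge and hence in the free space. This $b$ is the candidate point.

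Next I would argue that every point of $R_i$ can reach $b$ and likewise every point of $R_{i+1}$: given any $p \in R_i$, Lemma~\ref{lem:region-connectivity} gives a collision-free path inside $R_i$ from $p$ to $p_0$; concatenating with the initial segment of $\gamma$ from $p_0$ up to the crossing time gives a collision-free path from $p$ to $b$. Symmetrically for $q \in R_{i+1}$ using the tail of $\gamma$. The only subtlety is that the portion of $\gamma$ from $p_0$ to $b$ might dip out of $\overline{R_i}$ before returning — but it cannot, because to leave the closed region $\overline{R_i}$ a continuous curve in the free space must cross a boundary edge, and the only boundary edge whose removal keeps the curve in free space is the thin edge shared with a neighboring region; picking $b$ to be the \emph{first} crossing of \emph{any} region boundary ensures the prefix of $\gamma$ up to $b$ stays within $\overline{R_i}$. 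If that first crossing is into a region other than $R_{i+1}$, we simply relabel — but since the hypothesis is about the specific adjacent pair $(R_i,R_{i+1})$, we instead take $\gamma$ to be a path from $p_0\in R_i$ to $q_0\in R_{i+1}$ and let $b$ be its first exit point from $\overline{R_i}$; I would then note this exit must be onto the $R_i$–$R_{i+1}$ boundary, for otherwise $\gamma$ enters a third region and connectivity of $R_i$ to $q_0$ would route through that region, which does not contradict anything but does require a short induction on the number of boundary crossings to reduce to the one-crossing case.

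The main obstacle I anticipate is precisely this bookkeeping about which boundary a continuous free-space curve crosses first, and ruling out the degenerate possibility that $\gamma$ stays on the thin edge itself for an interval (which is fine — any such point works) or crosses at a corner of the edge where the edge meets an obstacle (which I would exclude by perturbing $b$ slightly into the relative interior, using that the free space is open). A cleaner route, which I would actually pursue, avoids $\gamma$ altogether: define $b$ directly as a point in the relative interior of the shared thin edge, and show that passability of the pair $(R_i,R_{i+1})$ forces the relative interior of that edge to be nonempty and contained in free space, whence Lemma~\ref{lem:region-connectivity} applied to $\overline{R_i}$ (an open-ish neighborhood thereof) and to $\overline{R_{i+1}}$ lets every point of each region reach $b$. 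In either formulation the heart of the argument is the topological fact that two points on opposite sides of a separating edge cannot be joined without touching that edge, combined with openness of the free space to keep $b$ away from degenerate corners.
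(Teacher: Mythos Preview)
Your ``cleaner route'' at the end is essentially what the paper does: it picks $b$ directly on the shared thin edge (existence of a collision-free boundary point follows from the passability hypothesis) and then argues that any point in either region can reach $b$. The paper is even terser than your version --- it asserts that a straight-line path from any point to $b$ works, possibly after a small perturbation around obstacle corners, appealing only to openness of the regions rather than to Lemma~\ref{lem:region-connectivity}.

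Your primary approach via the witness path $\gamma$ and first-crossing bookkeeping is correct in spirit but more elaborate than necessary; the paper never constructs $\gamma$ and so never has to worry about which boundary $\gamma$ exits through first or whether it detours into a third region. All of that machinery is avoided once you commit to defining $b$ directly on the thin edge, which you yourself identify as the preferable formulation.
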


\begin{proof}
The shared boundary between adjacent regions that the robot can pass through is composed of thin edges. By the definition of pairwise passability, there must exist at least one point $b$ on this boundary that is collision-free for the robot. Due to the openness of the regions and the continuity of the configuration space, any point in either region can reach $b$ through a straight-line path (which may need to be slightly perturbed to avoid obstacle corners, but such a perturbation is always possible due to the openness of the regions).
\end{proof}

Now we proceed to the proof of the main theorem.

\begin{proof}
($\Rightarrow$) Suppose there exists a feasible trajectory $\tau$ from $s$ to $g$. Since $\tau$ is continuous and collision-free, it must pass through a sequence of regions $R_1, R_2, ..., R_k$ where $s \in R_1$ and $g \in R_k$. For any consecutive pair of regions $(R_i, R_{i+1})$, the trajectory $\tau$ demonstrates the existence of at least one collision-free path between points in these regions (namely, the subpath of $\tau$ that connects them). By the continuity of the configuration space and the openness of regions, this path can be continuously deformed to connect any other pair of points in these regions while maintaining collision-freedom, thus satisfying the pairwise passability constraint.

($\Leftarrow$) Suppose we have a sequence of regions $R_1, R_2, ..., R_k$ satisfying the conditions. We can construct a feasible trajectory from $s$ to $g$ as follows:

\begin{enumerate}
\item Start at $s \in R_1$.
\item For each consecutive pair of regions $(R_i, R_{i+1})$:
\begin{itemize}
   \item By Lemma \ref{lem:boundary-crossing}, find a boundary point $b_i$ between $R_i$ and $R_{i+1}$.
   \item By Lemma \ref{lem:region-connectivity}, construct a path from the current position to $b_i$ within $R_i$.
   \item Then, move to $R_{i+1}$ through $b_i$.
\end{itemize}
\item Finally, by Lemma \ref{lem:region-connectivity}, construct a path from the last boundary point to $g$ within $R_k$.
\end{enumerate}

The concatenation of these paths forms a feasible trajectory from $s$ to $g$. The trajectory is continuous because each segment is continuous and they connect at the boundary points. It is collision-free because each segment either stays within a single region (where Lemma \ref{lem:region-connectivity} guarantees collision-freedom) or crosses between regions through a proven-safe boundary point.
\end{proof}

\section{The Circular Case} \label{app:circ}

Instead of rectilinear objects, suppose each query defines a robot as a circle of radius $q_i$, starting point $s_i$, and ending point $t_i$. In addition, all obstacles are circles with the same radius $r$. More formally, we prove the following theorem:
\begin{theorem}
    Given a set of $n$ obstacles $\mathcal{O} = \{O_1, ..., O_n\}$ in the two-dimensional Euclidean space $\mathbb{R}^2$, where each obstacle $O_i$ is characterized by its center $(x_i, y_i) \in \mathbb{R}^2$ and shared radius $r$, we can construct a valid partitioning and its planar graph dual in $\mathcal{O}(n \log n)$ runtime. 
\end{theorem}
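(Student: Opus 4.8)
The plan is to run the rectilinear development of Section~\ref{sec:partition_algorithm} in parallel, replacing the generalized $L_\infty$ gap with the Euclidean gap and the minimum pathway with the \emph{Gabriel disk}. For obstacles $O_i, O_j$ with centers $c_i=(x_i,y_i)$, $c_j=(x_j,y_j)$ and common radius $r$, define the passage width $g_{ij} := \norm{c_i - c_j} - 2r$; after the Minkowski-sum reduction of Appendix~\ref{app:mink} (a disk robot of radius $q$ together with a disk obstacle of radius $r$ becomes a point robot against a disk of radius $r+q$), a robot of radius $q$ can be moved between $O_i$ and $O_j$ precisely when $2q \le g_{ij}$. Take the minimum pathway of the pair to be the closed disk $D_{ij}$ having $\overline{c_i c_j}$ as a diameter, and call the constraint \emph{relevant} when the interior of $D_{ij}$ contains no other obstacle center. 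With these definitions the relevant constraints are, by construction, exactly the edges of the Gabriel graph $\mathrm{GG}(P)$ of $P=\{c_1,\dots,c_n\}$ from Section~\ref{def:gabriel_graph}.

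First I would prove the circular analogue of Lemma~\ref{lem:rect_irrelevant}: if some center $c_k$ lies in $D_{ij}$ then the $g_{ij}$ constraint never affects feasibility. By Thales' theorem the angle $\angle c_i c_k c_j$ is at least $\pi/2$, so $\norm{c_i - c_j}$ is the longest side of the triangle $c_i c_k c_j$, and hence $g_{ij}\ge g_{ik}$ and $g_{ij}\ge g_{jk}$. The rest of the argument is essentially verbatim the one in Lemma~\ref{lem:rect_irrelevant}: for the region $R$ cut off by $O_i,O_j,O_k$, either $s$ or $t$ lies in $R$, and then $g_{ij}$ cannot be binding since the robot already sits inside; or the path merely grazes $R$, in which case a robot large enough for $g_{ij}$ to bind is already stopped by the tighter $g_{ik},g_{jk}$, while a smaller robot gains nothing by entering and leaving $R$ through the same edge. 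Thus the relevant (Gabriel) edges already give a valid partition, and since the Gabriel graph is planar this simultaneously supplies the circular case of Lemma~\ref{lem:planar_constraints} without a separate case analysis.

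For the construction and the running time I would invoke the classical inclusion $\mathrm{GG}(P)\subseteq\mathrm{DT}(P)$, where $\mathrm{DT}$ is the Delaunay triangulation, computable in $\mathcal{O}(n\log n)$ time (Fortune's sweep-line algorithm or randomized incremental construction), using a symbolic perturbation to break cocircular configurations. I would take the bounded faces of $\mathrm{DT}(P)$ together with the outer face as the regions of the partition, weight each Delaunay edge $c_ic_j$ by $g_{ij}$, and read off the planar dual $G^*$ in linear time. A Delaunay edge that is \emph{not} a Gabriel edge is exactly one whose diametral disk contains another center (a local $\mathcal{O}(1)$ angle test at the two apexes adjacent to the edge), so by the irrelevance lemma above such edges can be kept harmlessly, and deleting them recovers the Gabriel graph itself if a minimal partition is wanted. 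A query $(s,t,q)$ is then answered, exactly as in Theorem~\ref{thm:constraint} and Section~\ref{sec:persistent_dsu}, by locating the faces of $G^*$ containing $s$ and $t$ and testing, in the subgraph of dual edges of weight at least $2q$, whether they lie in one component; the persistent disjoint-set-union structure gives $\mathcal{O}(\log n)$ per query and $\mathcal{O}(n)$ total space, so preprocessing is $\mathcal{O}(n\log n)$.

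The main obstacle is not the combinatorics above but faithfully transferring the continuous constraint-equivalence guarantee of Theorem~\ref{thm:constraint} to the circular setting: one must verify the circular counterparts of Lemmas~\ref{lem:region-connectivity} and~\ref{lem:boundary-crossing}, namely that after deleting the obstacle disks each Delaunay face is a region in which the point robot (the disk center) moves freely, that adjacency across a Delaunay edge $c_ic_j$ is governed by $g_{ij}$ and by nothing else, and that the unbounded outer face is handled correctly (in particular when two obstacle disks overlap, so that $g_{ij}\le 0$ and the edge is effectively closed). I expect these to need a short but careful openness-and-continuity argument rather than anything deep, after which the reduction to thresholded connectivity in $G^*$ goes through just as in the rectilinear case.
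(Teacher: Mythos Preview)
Your proposal is correct and follows essentially the same route as the paper: reduce via the Minkowski sum to point obstacles, prove the irrelevance lemma (your Thales-based ``longest side'' argument is exactly the geometric fact the paper uses, and your two-case analysis of the region $R$ matches its alternative proof), identify the relevant constraints with the Gabriel graph, and then compute the Delaunay triangulation in $\mathcal{O}(n\log n)$ as a planar superset, noting that extra Delaunay edges are harmless. The only cosmetic difference is direction---you enlarge obstacles and shrink the robot to a point, while the paper shrinks obstacles to points and enlarges the robot---but these are equivalent since all obstacle disks are congruent.
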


\begin{proof}
    We invoke the Minkowski sum technique of Appendix \ref{app:mink} in reverse (i.e. shrink the obstacles into points and embiggen the robot, which is valid since all obstacles are congruent). That is, for a query of a robot with radius $q_i$, then we instead replace it with a robot of radius $q_i + 2r$. We can verify that if the original robot can pass through obstacles $O_i$ and $O_j$, then this implies $$q_i \leq \norm{O_i - O_j} + 2r \implies q_i + 2r \leq \norm{O_i - O_j},$$
    as desired. The key intuition behind our algorithm is that each pair of obstacles induces an extra constraint on the feasibility of possible walks. However, most of these edges are irrelevant, as their constraints are strictly less powerful than a combination of other constraints. Namely, we have the following.

    \begin{lemma}
        \label{lemma_gabriel_graph}
        Irrelevant constraints never affect the feasibility of our queries.
    \end{lemma}

    \begin{proof}
        The obstacles $O_i$ and $O_j$ impose a constraint of maximum diameter $||O_i - O_j||$ for the robot. Furthermore, in order for a robot with that diameter to pass between the two obstacles, there must be no extra obstacles within the circle with $O_i$ and $O_j$ as the two diameter endpoints. Otherwise, if there exists another obstacle $O_k$ in the way, then the maximum constraint can never be tight. This implies that $O_k$ is closer to both $O_i$ and $O_j$ than $O_i$ and $O_j$ are to each other, so the constraints formed by $O_i$ and $O_k$ and $O_j$ and $O_k$ supersedes the constraint formed by $O_i$ and $O_j$.

        Hence, we have shown that if we simply ignore the edge $O_i$ to $O_j$, then the two other constraints is sufficient in bounding the circle below the $O_i  \rightarrow O_j \rightarrow O_k$ line. However, we still need to show that the triangular region constraint formed by $O_i$, $O_j$, and $O_k$ is also irrelevant. We denote $s_{i,j} = ||O_i - O_j||$, then by construction we have $s_{i,j} \geq s_{i, k}, s_{j, k}$. We define a robot of radius $s_{i, j} + \epsilon$. Then, the $O_i$ and $O_j$ constraint effectively bounds the center of any robot with $\epsilon > 0$ below the line segment between $O_i$ and $O_j$. However, we claim that just the $O_i$ and $O_k$ and $O_j$ and $O_k$ segment provides a stricter bound. Consider the case that point $O_k$ does not exist, then with a small enough $\epsilon$, the center of our robot should fit almost exactly on the $O_i$ and $O_j$ segment. However, since, $O_k$ is within the circumcircle by construction, $O_k$ will push the robot further away from the triangular region constraint than if the additional obstacle does not exist. Hence, the two segments $O_i, O_k$ and $O_j, O_k$ is sufficient, meaning that we can just delete the $O_i$ and $O_j$ segment.
        \begin{figure}[h!]
            \centering
            \begin{tikzpicture}
                \draw[black!80, dashed] (0,0) circle (2);
                
                \node[obs] (O1) at (-2,0) {$O_i$};
                \node[obs] (O2) at (2,0) {$O_j$};
                \node[obs] (O3) at (0,1.5) {$O_k$};
                
                \draw[edge] (O1) -- (O2) node[midway, below] {irrelevant};
                \draw[edge] (O1) -- (O3);
                \draw[edge] (O2) -- (O3);
                
                \path[region, opacity=0.2] (O1.center) -- (O2.center) -- (O3.center) -- cycle;
                
                \node[robot] (R) at (0,0.65) {$R$};
            \end{tikzpicture}
            \caption{A visualization of the circumcircle.}
            \label{fig:circumcircles}
        \end{figure}
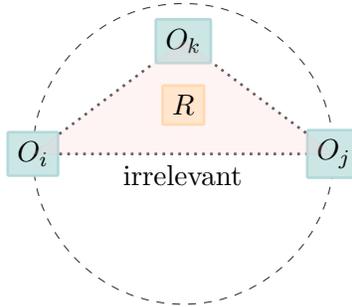

        Finally, we offer an alternative proof regarding the relevancy of triangle $O_i, O_j, O_k$, since the proof above is difficult to generalize for the original problem with rectangular robot and obstacles. 
        
        Let the partition region defined by the triangle be $R$, then for any given queries, there are two cases:
        \begin{itemize}
            \item The starting or ending point falls within $R$: Since we assume that it's always feasible to place the robot at the starting and ending point, the largest the robot can be is the circumcircle of $R$. By construction, $O_k$ is within the circumcircle formed by the diameter with endpoints $O_i$ and $O_j$, therefore it must have a diameter at most as large as $s_{i,j}$. Thus, the constraint induced by $s_i$ and $s_j$ is never effective.
            \item The path simply passes by region $R$: We only care about the $s_{i, j}$ constraint if the robot is larger than $s_{i ,j}$. However, by construction, the line formed by any two points inside the circle is at most as large as the diameter, so $s_{i,j} \geq s_{i, k}, s_{j, k}$.  This implies that if the robot is larger than $s_{i, j}$, then it can never enter the region to begin with, and this would be covered by the other two $s_{i, k}$ and $s_{j,k}$ checks. Otherwise, the robot would have to just enter the region through $S_{i, j}$ and leave through $S_{i, j}$ again, which is never optimal. Hence, we can safely ignore the $s_{i, j}$ constraint. 
        \end{itemize}

        The two cases show that we never need to treat $R$ as a separate region. Hence, we can safely delete the edge between $O_i$ and $O_j$ and still preserve a valid partitioning.
    \end{proof}

    The conditions defined by relevant constraints defined in Definition \ref{def:gabriel_graph} and Lemma \ref{lemma_gabriel_graph} are equivalent to conditions for constructing a Gabriel graph. It is well-known that the Gabriel graph is a subgraph of the Delaunay Triangulation; we repeat the proof below.
    \begin{theorem}
        For a set of $n$ points $O_1, ..., O_n$, the corresponding Gabriel graph, where points $O_i$ and $O_j$ share an edge if and only if no other points lie within the circle with diameter endpoints $O_i$ and $O_k$, is a subgraph of the Delaunay triangulation.
    \end{theorem}
    \begin{proof}
        Let edge $e=(O_i,O_j)$ be an arbitrary edge in the Gabriel graph. By definition of the Gabriel graph, the circle $C$ with diameter $O_iO_j$ contains no other points from the set. 
        
        Namely, the center of $C$ is at the midpoint of $O_iO_j$, and both $O_i$ and $O_j$ lie on its boundary. Now, observe that any circle passing through both $O_i$ and $O_j$ that has center on the perpendicular bisector of $O_iO_j$ and radius greater than or equal to $\frac{|O_iO_j|}{2}$ must completely contain circle $C$. Therefore, there exists a circle passing through $O_i$ and $O_j$ that contains no other points from the set.
        
        By the definition of the Delaunay triangulation, two points share an edge if and only if there exists a circle passing through them that contains no other points from the set. We have shown such a circle exists for edge $e$, therefore $e$ must also be present in the Delaunay triangulation.
        
        Since this holds for any arbitrary edge in the Gabriel graph, we conclude that the Gabriel graph is indeed a subgraph of the Delaunay triangulation.
    \end{proof}
    
    Since the triangulation is itself planar, all of our desired properties of the dual planar graph and partitioning still hold. In addition, having extra edges and regions only adds additional constraints, so it could only help with sufficiency. There exist numerous algorithms for finding the Delaunay Triangulation in $\mathcal{O}(n \log n)$ time (e.g. see \cite{VoronoiKarger}), which we have shown is a valid partitioning. Thus, this concludes our algorithm for finding a valid region partitioning in $\mathcal{O}(n \log n)$ time.
\end{proof}

\end{appendix}

\end{document}